\newtheorem{theorem}{Theorem}
\newtheorem{corollary}{Corollary}
\newtheorem{remark}{Remark}
\newcommand{\Pe}{P_\mathrm{b}}
\newcommand{\erfc}{\mathrm{erfc}}
\newcommand{\diff}{\mathrm{d}}
\newcommand{\kf}{\kappa_f}
\newcommand{\kb}{\kappa_b}
\newcommand{\uu}{\mathbf{u}}
\newcommand{\fscale}{0.48}
\newcommand{\A}{\mathrm{A}}
\newcommand{\B}{\mathrm{B}}
\newcommand{\C}{\mathrm{C}}
\newcommand{\dt}{\Delta t}
\newcommand{\Tx}{\mathrm{Tx}}
\newcommand{\Rx}{\mathrm{Rx}}
\newcounter{eqnback1}
\begin{document}
	
	\title{\huge Chemical Reactions-based Detection Mechanism for Molecular Communications}
%	\author{Authors}
%	      \author{\IEEEauthorblockN{Trang Ngoc Cao,  Vahid Jamali, Wayan Wicke, Nikola Zlatanov,\\ Phee Lep Yeoh, Jamie S Evans, and Robert Schober}}
	      		%}
	      		%\thanks{%
	      			%This paper has been presented in part at IEEE ICC 2019.
	      	%	}%
	\author{\IEEEauthorblockN{Trang Ngoc Cao\IEEEauthorrefmark{1}, Vahid Jamali\IEEEauthorrefmark{2}, Wayan Wicke\IEEEauthorrefmark{3},  Nikola Zlatanov\IEEEauthorrefmark{4}, \\ Phee Lep Yeoh\IEEEauthorrefmark{5}, Jamie Evans\IEEEauthorrefmark{6}, and Robert Schober\IEEEauthorrefmark{3}}\\
		
		%\thanks{%
			%This work was supported partially by the Australian Research Council Discovery Projects under Grants DP180101205 and DP190100770; the Emerging Fields Initiative (EFI) of the Friedrich-Alexander-Universit\"at Erlangen-N\"urnberg (FAU); and the STAEDTLER Stiftung.
	%	}%

\thanks{This paper was presented in part at the
	IEEE Wireless Communications and Networking Conference 2020 \cite{Tra:20:WCNC}.}%

\thanks{T. N. Cao is with the School of Psychological Sciences, Monash University, Melbourne, VIC 3800, Australia
	(e-mail: trang.cao@monash.edu).}
\thanks{ V. Jamali is
	with the Department of Electrical and Computer,
	Princeton University, New Jersey 08544, Germany (e-mail:
	jamali@princeton.edu).}
\thanks{ W. Wicke and R. Schober are
	with the Institute for Digital Communications, Friedrich-Alexander-
	Universit\"at Erlangen-N\"urnberg, Erlangen 91058, Germany (e-mail:
	 wayan.wicke@fau.de; robert.schober@fau.de).}
\thanks{N. Zlatanov is  with the Department of Electrical and Computer Systems
	Engineering, Monash University, Melbourne, VIC 3800, Australia (e-mail:
	nikola.zlatanov@monash.edu).}
\thanks{P. L. Yeoh is with the School of Electrical and Information Engineering, University of Sydney, Sydney, NSW 2006, Australia
	(e-mail: phee.yeoh@sydney.edu.au).}
\thanks{J. S. Evans is with the Department of Electrical and Electronic
	Engineering, University of Melbourne, Melbourne, VIC 3010, Australia
	(e-mail: jse@unimelb.edu.au).}
		}%

	\maketitle
	\begin{abstract}

In molecular communications, the direct detection of signaling molecules may be challenging due to a lack of suitable sensors and interference from co-existing substances in the environment. Motivated by research in molecular biology, we investigate an indirect detection mechanism  using chemical reactions between the signaling molecules and a molecular probe to produce an easy-to-measure product  at the receiver. 
We consider two implementations of the proposed detection mechanism, i.e.,  unrestricted probe movement and  probes restricted to a volume around the receiver.
In general, the resulting reaction-diffusion equations that describe the concentrations of the reactant and product molecules in the system are non-linear and coupled, and cannot be solved in closed form. To evaluate these molecule concentrations, we develop an efficient iterative algorithm by discretizing the time variable and solving for the space variables of the concentration equations in each time step. The accuracy of our proposed algorithm is verified by particle-based simulations.  
 In the special case when the concentration of the unrestricted probes is high and not changed significantly by the chemical reaction compared to the signaling molecule concentration, we can obtain insightful closed-form solutions. 
 Our results show that the  concentrations of the product molecules and the signalling molecules share a similar characteristic over time, i.e., a single peak and a long tail. The peak and tail values of the product molecule concentration can be controlled by choosing  probes with suitable parameters, e.g., the diffusion coefficient,  reaction rate, and  released quantity. 
%   Moreover, when the proposed detection mechanism is implemented in a bounded volume around the receiver, 
%   the use of resource, i.e., the probes, is more efficient compared to that in an unbounded environment.
 %
 We analyze the bit error rate (BER) of the system for a threshold decision rule. Furthermore, we highlight that by carefully choosing the molecular probe and  optimizing  the  decision  threshold, the BER  can be improved significantly
   and outperform that of a direct detection system. Moreover, when  molecular probes are kept in a small volume around the receiver, fewer resources, i.e.,  probe molecules, are needed to achieve the same BER and an even higher data rate compared to the case when they are not restricted.

	\end{abstract}
	\section{Introduction}\label{sec:1}

	In molecular communications (MC), information is typically encoded in the number, type, or time of the release of signaling molecules.   The encoded information is detected at the receiver by a sensor \cite{JAW:18:Arxiv,Bi:22:Sur,Pan:22:NM}. Therefore, sensor technology, in particular  chemical sensors, plays an important role for the design of receivers in MC systems. 
	
	Chemical sensors are designed to provide a measurable signal corresponding to the concentration of an analyte (i.e.,  a chemical substance) in the  environment \cite{JCW:19:che}. This measurement can be based on   magnetic or electrical fields, resistance, capacitance, inductance, or an optical response \cite{JCW:19:che}.  In MC, the selection of the sensor technique depends on the specific requirements of the considered application. For example, magnetic field based sensing was used in \cite{Way:21:TMBMC} and resistance based sensing was applied in \cite{FGE:13:PO}. The systems in \cite{Way:21:TMBMC} and \cite{FGE:13:PO} have demonstrated
	the possibility of realizing MC but they are fairly simple since there are no interfering sources impairing  the detection of the
	signaling molecules, i.e., no other magnetic  \cite{Way:21:TMBMC} or alcohol sources \cite{FGE:13:PO} besides the desired signal. Nevertheless,
	in many practical applications of MC, e.g.,  drug delivery and health monitoring, there usually exist  other chemical substances which may cause interference for the detection of the signaling molecules. Environmental monitoring applications also need to handle environments where many different chemicals and electromagnetic sources are present and potentially cause interference. For example,  chemicals such as zinc and copper have similar magnetic susceptibility and electrical resistivity and thus are difficult to distinguish at the receiver.  In such cases, one possible solution for detection is to employ unique chemical reactions where only the signaling molecule, i.e., the analyte, can react with a specific reactant, i.e., a molecular probe, to produce a product molecule which can be easily measured.  This approach has been an area of intense research in molecular biology, see \cite{JCW:19:che} and references therein. For example, zinc ions react with spiropyran and produce a merocyanine metal complex, which exhibits florescence, i.e., it emits light of a wavelength that can be measured via optical spectroscopy \cite{Nat:10:Tet,JCW:19:che}. 
	Furthermore, synthesizing  molecular probes that are matched to a given analyte and the considered environment has been an active area of research, see \cite{JCW:19:che} and references therein.
	
	Note that, in some MC systems, the signaling molecules should be small and lightweight, e.g., zinc ions or calcium ions, such that they can be easily stored at the transmitter and can  diffuse quickly from the transmitter to the receiver. On the other hand,  the product molecules of the reaction, i.e., the combination of the probe and the analyte \cite{LYY:18:CO}, which can be detected directly by the receiver,  are usually larger molecules and thus may not be   suitable as quickly-diffusive signaling molecules. 
	%In fact, zinc based signaling  and calcium based signaling are very common in biological cell signaling \cite{Mar:17:MS},  and have inspired MC system design. 
	Moreover,  when the reaction occurs, a measurable signal, e.g., light, corresponding to the reaction product may be generated but then  disappear quickly by a process referred to as quenching \cite{JCW:19:che,Zha:16:MC, Liu:17:Lum}, which is useful for reducing inter-symbol interference (ISI). 
	%If the property of the signaling molecules is to be measured directly, it has to  exist constantly and thus resulting in considerable ISI. Therefore, in this work, 
	Motivated by these advantages, in this work, we propose a novel MC detection  mechanism  based on the reaction of signaling molecules with a molecular probe. 
	
		Chemical reactions have been studied in different contexts for  MC. For example, chemical reactions were used to generate signaling molecules at the transmitter \cite{Yan:20:TCOM} and  potent drugs  on the surface of the receiver \cite{CMM:16:NB}. The reactions of signalling molecules with enzymes in the environment  were exploited  to mitigate ISI in \cite{NCS:14:INB,Chan:17:ETT}.   The reactions of signalling molecules with molecules  around the  receiver were used for the detection in \cite{Adam:19:TCOM,Cho:13:NB,CMM:16:NB,Yuting:22:TCOM}. Chemical reactions have also been considered for coding and modulation in \cite{FAM:19:TCOM,JFS:19:MBSC, Abin:21:TCOM}. In  \cite{Yan:20:TCOM}, the chemical reactions were assumed to occur in a one dimensional environment and the concentration of one reactant was known.  The enzyme in \cite{CMM:16:NB} and the receptors in \cite{Adam:19:TCOM,Cho:13:NB,CMM:16:NB,Yuting:22:TCOM}, i.e., one of the reactants,  were assumed to be immobile. Moreover, the authors in \cite{NCS:14:INB,Chan:17:ETT} considered a fast reaction where the concentration of the enzymes remained constant. The authors in \cite{FAM:19:TCOM,JFS:19:MBSC} focused on the concentration of the signaling molecules, i.e., the reactants, but the products of the reaction were of no interest and not studied. In \cite{ZDN:18:Glo},  the  molecules emitted by the transmitter and the product of the reaction  during propagation were both considered, but the reaction was a degradation reaction and thus modeled as a first-order reaction. In \cite{Abin:21:TCOM},  approximate solutions of the reaction-diffusion equations are given in  the  forms  of    infinite  series  of  functions and recursive  equations are used to solve for these functions. However, the approximate solutions is shown to only converge to  the  true  solution  if  either  the  simulation  time  interval  or the reaction  rate is sufficiently small. Moreover, \cite{Abin:21:TCOM} consider the reaction-diffusion equations for modulation in a two-transmitter system without ISI.
		In this work, we consider the reaction between signaling molecules and  molecular probes, which has to be modeled as a  second-order reaction. To the best of the authors' knowledge,  second-order reactions with the reactants not being bound to the receiver but diffusing in the environment  have not previously been considered for detection design of MC systems with ISI. 
		%Moreover, the concentration of the reaction product in such systems, which is the solution of a   second-order reaction diffusion equation, has not been analyzed. 
		
		We consider two possible implementations of the proposed detection mechanism. We first study the general and simple implementation where the molecular probes are released and then freely diffuse in an unbounded environment. To analyze the system performance, we propose an algorithm to determine the concentration of the product of the detection reaction.  We then investigate the special case  where the concentration of the molecular probes is high compared to the concentration of the analyte and thus not significantly affected by the chemical reaction. In this case, a closed-form solution for the concentration of the reaction product can be obtained.   For the second implementation, instead of distributing the molecular probes everywhere in the environment, we aim for an efficient usage of the probes. To this end, the probes are trapped in a small volume around the receiver. By doing this, the probes can react with the signaling molecules and create more products to be detected by the receiver instead of dispersing in the environment and  being wasted. The probes can be kept inside a volume by a natural or synthetic membrane that only allows the signaling molecules to pass through. In biological systems, a semipermeable  natural membrane can block certain types of molecules and allow others to diffuse across it via a process called osmosis \cite{ZAC:19:book, AJL:15:Book}. Synthetic membranes can also separate the probes from the external environment via a membrane process, which can be driven by pressure, concentration, or an electric field across the membrane \cite{Li:07:Book}. To analyze the system performance for this implementation, we suitably modify the algorithm developed for freely diffusing probes.
						%, by an electronic field \cite{}, or by acoustic wave \cite{bibid}

In this paper, we propose a novel MC detection concept  based on a chemical reaction between diffusive molecules and make the following main contributions:
	\begin{itemize}
		\item We propose a novel detection mechanism for MC systems in which the direct detection of the signaling molecules is not possible or not efficient. A molecular probe is employed to convert the original signaling molecules into product molecules which can be efficiently detected.
		
		\item We develop a robust iterative algorithm for evaluation of the spatio-temporal distribution of the product molecules by solving the underlying non-linear and coupled reaction diffusion equations. We apply the algorithm to two system implementations with suitable adaptations.
		
		\item We derive a closed-form expression for the concentration of the product molecules for the  case where the concentration of the unrestricted molecular probe is not significantly affected by the chemical reaction.
		
		\item We design a system where the probes are restricted to a volume around the receiver to enable an efficient use of the available resources.

		\item We analyze the performance of the  proposed detection mechanism  in terms of the bit error rate (BER). Furthermore, we provide new insights for  system design with regard to the optimal decision threshold, data rate, efficiency, and the  molecular probe's parameters such as the diffusion coefficient,  reaction rate, and  released quantity.

	\end{itemize}

This paper is an extension of a conference paper \cite{Tra:20:WCNC} which did not consider restricted  molecular probe movement and the resulting performance in terms of resource efficiency, BER, and data rate. Moreover, the impact of the system parameters on the product molecule concentration is  analyzed more in depth in this paper compared to \cite{Tra:20:WCNC}.
	
The remainder of this paper is organized as follows. In Section II, we introduce the system model and the proposed detection mechanism. In Section III, we analyze the molecule concentrations for three implementations of the proposed detection mechanism  and the resulting detection performance.  Numerical results are presented in Section IV, and
	Section V concludes the paper.

	\begin{figure}[!t]
		\centering
		\includegraphics[width=0.51\textwidth,trim=60 230 40 270,clip]{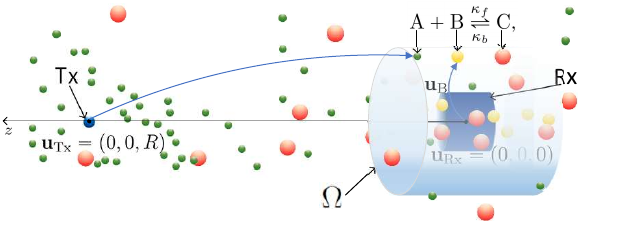}
		\caption{
			Schematic illustration of the system model. Type $\A$ molecules are released from the transmitter, $\Tx$, and react with type $\B$ molecules, released at position $\uu_\B$, in order to create type $\C$ molecules, which can be measured by the receiver, $\Rx$. The $\B$ molecules can be restricted in  volume $\Omega$ around $\Rx$.}
		\label{fig:3}
	\end{figure}

	\section{System Model and Detection Mechanism}\label{sec:2}
	We consider an MC system consisting of a point source transmitter, denoted by $\Tx$, and a transparent receiver, denoted by $\Rx$, in an unbounded three dimensional (3D) diffusive environment
with constant temperature and viscosity. The receiver has volume $\mathcal{V}_{\Rx}$ and is located at distance $R$ from the transmitter. Using  cylindrical coordinates \footnote{We choose cylindrical coordinates so that calculations for the symmetric systems can be simplified.}, where  position $\uu$ is defined as $\uu=(\rho,\phi,z), \rho\in[0,\infty), \phi\in[0,2\pi)$, and $z=(-\infty,\infty)$,  $\Tx$ and $\Rx$ are centered at $\uu_\Tx=(0,0,R)$ and $\uu_\Rx=(0,0,0)$, respectively. 
%	Let $R$ denote the  $r_\Tx=(0,0,)$, $r_\Rx=(0,0,0)$, and  denote the position of $\Tx$, the position of the center of $\Rx$, and   respectively. 
Let $T$ denote the duration of a symbol interval.
	We assume on-off keying modulation and that $\Tx$ releases  $N_\A$   molecules of type $\A$ to convey bit $``1"$ and  no molecules to convey bit $``0"$ at the beginning of the symbol interval, i.e., at $t =nT, n=0,1,\dots, N$, where $N$ is the length of the bit sequence. We assume that bits $``0"$ and $``1"$ have equal probabilities.  

%		%%%%%%%%%%%%%%%%%%%%%%%%%%%%%%%%%%%%%%%%%%%%%%%%%%%%%%%%%%%%%%%%%%%
%		%%%%%%%%%%%%%%%%%%%%%%%%%%
%		\setcounter{eqnback1}{\value{equation}} \setcounter{equation}{1}
%		\begin{figure*}[!t]
%			
%			
%			%
%			\begin{IEEEeqnarray}{lll} \label{eq:27}
%				\frac{\partial C_\A(\uu,t)}{\partial t}=G_\A(\uu,t)+D_\A \nabla^2C_\A(\uu,t) -\kf C_\A(\uu,t) C_\B(\uu,t) +\kb C_\C(\uu,t),\IEEEyesnumber\IEEEyessubnumber \label{eq:27a}\\
%				\frac{\partial C_\B(\uu,t)}{\partial t}=G_\B(\uu,t)+D_\B \nabla^2C_\B(\uu,t)-\kf C_\A(\uu,t) C_\B(\uu,t) +\kb C_\C(\uu,t),\IEEEyessubnumber\label{eq:27b}\\
%				\frac{\partial C_\C(\uu,t)}{\partial t}=D_\C \nabla^2C_\C(\uu,t)+\kf C_\A(\uu,t) C_\B(\uu,t) -\kb C_\C(\uu,t), \IEEEyessubnumber \label{eq:27c}
%			\end{IEEEeqnarray}
%			\hrulefill
%			%		\setcounter{eqncnt1}{\value{equation}}
%			%		\setcounter{equation}{\value{eqnback1}}
%		\end{figure*}
%		%%%%%%%%%%%%%%%%%%%%%%%%%%%%%%%%%%%%%%%%%%%%%%%%%%%%%%%%%%%%%%%%%%%%%%%%%%%%%%%%%%%%%%%%%%%%%%

			%%%%%%%%%%%%%%%%%%%%%%%%%%%%%%%%%%%%%%%%%%%%%%%%%%%%%%%%%%%%%%%%%%%
%%%%%%%%%%%%%%%%%%%%%%%%%%
\setcounter{eqnback1}{\value{equation}} \setcounter{equation}{1}
\begin{figure*}[!b]
		\hrulefill
	\begin{IEEEeqnarray}{lll} \label{eq:27}
		\frac{\partial C_\A(\uu,t)}{\partial t}=G_\A(\uu,t)+D_\A \nabla^2C_\A(\uu,t) -\kf C_\A(\uu,t) C_\B(\uu,t) +\kb C_\C(\uu,t),\IEEEyesnumber\IEEEyessubnumber \label{eq:27a}\\
		\frac{\partial C_\B(\uu,t)}{\partial t}=G_\B(\uu,t)+D_\B \nabla^2C_\B(\uu,t)-\kf C_\A(\uu,t) C_\B(\uu,t) +\kb C_\C(\uu,t),\IEEEyessubnumber\label{eq:27b}\\
		\frac{\partial C_\C(\uu,t)}{\partial t}=D_\C \nabla^2C_\C(\uu,t)+\kf C_\A(\uu,t) C_\B(\uu,t) -\kb C_\C(\uu,t), \IEEEyessubnumber \label{eq:27c}
	\end{IEEEeqnarray}

\end{figure*}
%%%%%%%%%%%%%%%%%%%%%%%%%%%%%%%%%%%%%%%%%%%%%%%%%%%%%%%%%%%%%%%%%%%%%%%%%%%%%%%%%%%%%%%%%%%%%%

	We assume that the type $\A$ molecules cannot be detected directly  at the receiver as a suitable sensor is not available. Hence,   type $\B$ molecules are introduced into the system to react with the type $\A$ molecules  to create type $\C$ molecules  for which suitable sensors are available, see Fig.~\ref{fig:3}.  The type $\B$ molecules are referred to as  molecular probes \cite{Her:08:Book}. The type $\B$ molecules may be released at a fixed position, denoted by $\uu_\B$, e.g., $\uu_\B=\uu_\Rx$, or uniformly throughout the environment. Furthermore, the type $\B$ molecules may be restricted to a volume around the receiver, denoted by $\Omega$.  In cylindrical coordinates, we assume $\Omega$ to be the cylindrical volume bounded by $0\leq \rho\leq a$, $0\leq \phi <2\pi$, and $-\frac{b}{2}\leq z \leq \frac{b}{2}$.\footnote{ Cylinder, i.e., rod shape, is a common morphology of cells and bacteria, which was optimized by evolution  \cite{Vin:03:EMBO,You:07:COM}.  Since MC is provisioned to be applied in biological systems such as human bodies, cylinder is a good choice for MC system designs in order to leverage the advantages of this morphology.} When type $\B$ molecules diffuse freely in an unbounded environment, we have $a, b \rightarrow \infty$.  We model the sensing process via a transparent receiver which counts the number of the type $\C$ molecules in its volume without affecting the molecules. For example, for the detection of the zinc ions mentioned in the introduction, optical spectroscopy is used to measure the light intensity which is proportional to the number of product molecules, i.e., the type $\C$ molecules. We note that the effect of quenching, which could be exploited for ISI reduction, is neglected and  left for future work. We assume that the chemical reaction between the type $\A$ and $\B$ molecules is reversible and can be modeled as follows
	\setcounter{equation}{0}
	\begin{align} \label{eq:1}
	\A+\B\underset{\kb}{\overset{\kf}{\rightleftharpoons} }\C,
	\end{align}
	where $\kf$ is the forward reaction rate constant of a second-order reaction and $\kb$ is the backward reaction rate constant of a first-order reaction.
	We assume that the type $\A$, $\B$, and $\C$ molecules diffuse in the unbounded 3D  environment with diffusion coefficients $D_\A$, $D_\B$, and $D_\C$, respectively. Thereby, the concentrations of the type $\A$, $\B$, and $\C$ molecules at time $t$ and position $\uu$, denoted by $C_i(\uu,t), i\in\left\{\A,\B,\C \right\}$,  are governed by a set of reaction diffusion  equations in \eqref{eq:27} \cite[Eqs.~(9.5), (9.13)]{Cha:05:Book}, given at the bottom of this page,
	where  $\nabla^2$ is the Laplace operator and  
	\setcounter{equation}{2}
	\begin{align}
	\label{eq:28}
	G_i(\uu,t)=\sum_{t_i} \sum_{\uu_i}	N_i\delta_{\mathrm{d}}(t-t_i)\delta_{\mathrm{d}}(\uu-\uu_i), \hspace{1cm} i \in\left\{\A,\B\right\},
	\end{align}
	represents the concentration of the type $i$ molecules that are released into the channel.  In \eqref{eq:28},  $\delta_{\mathrm{d}}(\cdot)$, $N_i$,  $t_i$, and $\uu_i$  are the Dirac delta function,  the  number of molecules  released at time $t_i$,  the release times, and  the release positions of the type $i$ molecules, respectively. % We note that $N_\C=0$ since the $\C$ molecules are not released.
%	In \eqref{eq:27a}, the left hand side describes the total change of the concentration of $\A$ over time, the  right hand side describes the change of $\A$ concentration due to the release, the diffusion, the forward reaction, and the backward reaction, respectively. 
	The partial differential equations (PDEs) in \eqref{eq:27} are non-linear and coupled, i.e., the concentration of the type $i$ molecules after $n$ releases is not equal to the sum of the concentrations originating from each release. %Thus, the impact of all releases in (3) has to be accounted for when solving (2).
	Thus, the PDEs in \eqref{eq:27} do not have a closed-form solution in general \cite{Debnath:11:Book,NCS:14:INB,JFS:19:MBSC}.
	On the other hand, we need to obtain  $C_\C(\uu,t)$  to design the proposed MC system and to analyze its performance. To this end, in the next section, we will propose an algorithm for solving \eqref{eq:27} for  $C_\C(\uu,t)$ and analyze the system performance.

	\section{System  Analysis Framework} \label{sec:3}
	In this section, we consider  two implementations of the proposed detection mechanism. We first start with an implementation of the detector when the location of the probes is not restricted. We then consider the special case where the  concentration of the unrestricted probes is constant over time. Second, we investigate an implementation of the detector with probes restricted in a small volume around the receiver. We present an efficient numerical algorithm for determining $C_i(\uu,t)$, $i\in\left\{\A,\B,\C \right\}$, which can be used for both implementations.  For the special case of  unrestricted probes of constant concentration, we derive  analytical expressions for $C_i(\uu,t)$. Finally, we analyze the  performance of the system for a simple decision rule.
	\subsection{Detection with Unrestricted Probes}\label{sub3:1}

	 In this subsection, we consider the detection in an unbounded environment, i.e.,  bounds $a, b \rightarrow \infty$ in cylindrical coordinates. For determining $C_i(\uu,t)$, we adapt   \cite[Algorithm~1]{JFS:19:MBSC} to the problem at hand. The basic concept behind this algorithm is to discretize the time variable and  solve the PDEs in terms of the space variable. Considering small time intervals allows us to decouple the diffusion and reaction equations\footnote{For a detailed mathematical proof of the accuracy of the algorithm with respect to the decoupling of diffusion and reaction, please refer to \cite{JFS:19:MBSC}.}. In particular, Algorithm~\ref{al:1}, shown at the top of the page, summarizes the  steps needed for calculating the concentrations of the type $\A, \B$, and $\C$ molecules, where $T^{\max}$ is the maximum  time considered. We will verify the accuracy of the resulting numerical algorithm via particle-based simulation in Section~\ref{sec:4}. 
	 
	 \begin{algorithm}[t!]
	 	
	 	\caption{Interative Calculation of the Concentrations}
	 	
	 	\begin{algorithmic}
	 		%			\label{al:0}
	 		\State\textit{Initialization}: $t=0, \dt, T^{\max}$, and $C_i(\uu,t=0)$.
	 		\While {{$t\leq T^{\max}$}}
	 		\State Update $t$ as $t+\dt$.
	 		\State Compute $\bar{G}_i(\uu)$  by \eqref{eq:23}, $C_i^\mathrm{df}(\uu,t)$ by \eqref{eq:25}, and $C_i^\mathrm{rc}(\uu,t)$ by \eqref{eq:5}, concurrently for all $i$.
	 		\State Update $C_i(\uu,t)$ based on \eqref{eq:22}, concurrently for all $i$.
	 		\EndWhile
	 		\State Return $C_i(\uu,t)$.
	 	\end{algorithmic}
	 	\label{al:1}
	 \end{algorithm}
	 
	  In Algorithm~\ref{al:1}, $C_i(\uu,t)$ is updated for  $i=\left\{\A,\B,\C\right\}$ in each iteration as follows \cite{JFS:19:MBSC}
	\setcounter{equation}{3}
	\begin{align}
	\label{eq:22}
	C_i(\uu,t+\dt)&=\bar{G}_i(\uu)+C_i^\mathrm{df}(\uu,t+\dt)\nonumber\\
	&\quad+C_i^\mathrm{rc}(\uu,t+\dt)-C_i(\uu,t),
	\end{align}
	where $\bar{G}_i(\uu)$ is the concentration of the type $i$ molecules  released at $\uu$ in  time interval $\left[t,t+\Delta t\right]$. $C_i^\mathrm{df}(\uu,t+\dt)$ and $C_i^\mathrm{rc}(\uu,t+\dt)$ are the concentrations of the type $i$ molecules assuming that in interval $[t,t+\Delta t]$ only diffusion and only reactions occur, respectively, while the other phenomenon is absent. The  updates  of the concentrations in \eqref{eq:22} are given in the following.
	
		%%%%%%%%%%%%%%%%%%%%%%%%%%%%%%%%%%%%%%%%%%%%%%%%%%%%%%%%%%%%%%%%%%%%%%%%%%%%%%%%%%%%%%%%%%%%
	\setcounter{eqnback1}{\value{equation}} \setcounter{equation}{8}
	\begin{figure*}[!b]
		
		\hrulefill
		\begin{IEEEeqnarray}{lll} \label{eq:5}
			C_\A^\mathrm{rc}(\uu,t+\Delta t)=\frac{c_2(\uu)+\kf c_{11}(\uu)-\kb-\left(c_2(\uu)-\kf c_{11}(\uu)+\kb\right)c_4(\uu) \exp\left(-c_2(\uu)\Delta t\right)}{2\kf \left(1+c_4(\uu) \exp\left(-c_2(\uu) \Delta t\right)\right)},\nonumber\\\IEEEyesnumber\IEEEyessubnumber \label{eq:5a}\\
			C_\B^\mathrm{rc}(\uu,t+\Delta t)=\frac{c_2(\uu)-\kf c_{11}(\uu)-\kb-\left(c_2(\uu)+\kf c_{11}(\uu)+\kb\right)c_4(\uu) \exp\left(-c_2(\uu)\Delta t\right)}{2\kf \left(1+c_4(\uu) \exp\left(-c_2(\uu) \Delta t\right)\right)},\nonumber\\\IEEEyessubnumber\label{eq:5b}\\
			C_\C^\mathrm{rc}(\uu,t+\Delta t)=c_{12}(\uu)-C_\A^\mathrm{rc}(\uu,t+\Delta t), \IEEEyessubnumber \label{eq:5c}
		\end{IEEEeqnarray} 

	\end{figure*}
	%%%%%%%%%%%%%%%%%%%%%%%%%%%%%%%%%%%%%%%%%%%%%%%%%%%%%%%%%%%%%%%%%%%%%%%%%%%%%%%%%%%%%%%%%%%%%%
	
	\subsubsection{	Update of Release}
	As proved in \cite{JFS:19:MBSC}, $\bar{G}_i(\uu)$ is given by
	\setcounter{equation}{4}
	\begin{align}
	\label{eq:23}
	\bar{G}_i(\uu)=\sum_{t_i}N_i\delta_{\mathrm{d}}(t+\dt-\varepsilon -t_i)\delta_{\mathrm{d}}(\uu),
	\end{align}
	where $\varepsilon$ is an arbitrary small positive real number satisfying $\varepsilon \ll \Delta t$ and $t_i$ is given by $t_i=m\Delta t -\varepsilon, m \in \mathbb{N}$.
	\subsubsection{Update of $C_i^\mathrm{df}(\uu,t+\dt)$}
	As shown in \cite{JFS:19:MBSC},
	\begin{align}
	\label{eq:25}
	C_i^\mathrm{df}(\uu,t+\dt)&=\frac{1}{\left(4\pi D_i \dt\right)^{3/2}}\int_{\tilde{\uu}} C_i(\tilde{\uu},t)\nonumber\\
	&\quad\times\exp\left(-\frac{||\uu-\tilde{\uu}||^2}{4D_i\dt}\right)\diff \tilde{\uu}.
	\end{align}

	 Due to the symmetry of the system, we  choose cylinder coordinates to simplify the calculation of \eqref{eq:25}   in Corollary~\ref{cor:1}.
	 \begin{corollary}\label{cor:1}
	 	Using cylindrical coordinates,  	$C_i^\mathrm{df}(\uu,t+\dt)$ is  given by
	 	\begin{align}\label{eq:35}
	 		C_i^\mathrm{df}(\uu,t+\dt)&=\frac{2\pi}{\left(4\pi D_i \dt\right)^{3/2}}\int_{\tilde{\rho}=0}^\infty\int_{\tilde{z}=-\infty}^\infty C_i(\tilde{\rho},\tilde{z},t)\nonumber\\
	 		&\quad \times W_i^z(z,\tilde{z})W_i^\rho(\rho,\tilde{\rho}) ~\diff \tilde{z}\diff \tilde{\rho},
	 	\end{align}
	 	where
	 						\begin{IEEEeqnarray}{lll} \label{eq:36a}
	 						W_i^z(z,\tilde{z})=\exp\left(-\frac{\left(z-\tilde{z}\right)^2}{4D_i \Delta t}\right),\IEEEyesnumber\IEEEyessubnumber \label{eq:36}\\
	 							W_i^\rho(\rho,\tilde{\rho})=\exp\left(-\frac{\rho^2-\tilde{\rho}^2}{4D_i \Delta t}\right)\tilde{\rho}~I_0\left(\frac{\rho\tilde{\rho}}{2 D_i \Delta t}\right), \IEEEyessubnumber \label{eq:37}
	 						\end{IEEEeqnarray} 
%	 	\begin{align}\label{eq:36}
%	 	W_i^z(z,\tilde{z})=\exp\left(-\frac{\left(z-\tilde{z}\right)^2}{4D_i \Delta t}\right),
%	 	\end{align}
%	 	\begin{align} \label{eq:37}
%	 	W_i^\rho(\rho,\tilde{\rho})=\exp\left(-\frac{\rho^2-\tilde{\rho}^2}{4D_i \Delta t}\right)\tilde{\rho}~I_0\left(\frac{\rho\tilde{\rho}}{2 D_i \Delta t}\right),
%	 	\end{align}
	 	and $I_0(\cdot)$ is the zeroth order modified Bessel function of the first kind.
	 \end{corollary}	
	 \begin{proof}
	 	Please refer to Appendix~\ref{app:1}.
	 	\end{proof}
%	Note that the integrals in \eqref{eq:35} can be approximated by summations when the boundary values  of  $\tilde{\rho}$ and $\tilde{z}$ are large enough and the steps between values of  $\tilde{\rho}$ and $\tilde{z}$ are small enough. 
Note that $W_i^z(z,\tilde{z})$ and $W_i^\rho(\rho,\tilde{\rho})$ do not change over time, and thus, can be evaluated offline and used online in order to reduce computational complexity. 
%Without simplifying \eqref{eq:25} to \eqref{eq:35} or offline calculation of $W_i^z(z,\tilde{z})$ and $W_i^\rho(\rho,\tilde{\rho})$, an average computer may be overloaded with calculation and memory.
	
	\subsubsection{Update of $C_i^\mathrm{rc}(\uu,t)$}
	Since, in this work,  the product of the reaction  is used for detection whereas in \cite{JFS:19:MBSC} it was of no interest for the considered system, the reaction diffusion  equations in \cite{JFS:19:MBSC} are different from those in this work. Hence,  in order to use Algorithm~\ref{al:1}, we require $C_i^\mathrm{rc}(\uu,t)$,  which is given in the following theorem.

	\begin{theorem} \label{the:1}
		The concentration  of the type $i\in \left\{\A,\B,\C \right\}$ molecules at time $t$ and position $\uu$ assuming that   only reactions occur and diffusion is absent is given by \eqref{eq:5} at the bottom of this page,
		where 
		\setcounter{equation}{9}
		\begin{align}
		&c_{11}(\uu)=C_\A(\uu,t)-C_\B(\uu,t), \\
		&c_{12}(\uu)=C_\A(\uu,t)+C_\C(\uu,t), \\
		&c_2(\uu)=\sqrt{\left(-\kf c_{11}(\uu)+\kb\right)^2+4 \kf \kb c_{12}(\uu)},\\ &c_3(\uu)=C_\A(\uu,t)+C_\B(\uu,t), \\
	&c_4(\uu)=\frac{c_2(\uu)-\kf c_3(\uu)-\kb}{c_2(\uu)+\kf c_3(\uu) +\kb}.
	\end{align}
	\end{theorem}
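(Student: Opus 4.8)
The plan is to note that dropping the diffusion operator $D_i\nabla^2 C_i$ and the source $G_i$ from \eqref{eq:27} leaves, at each fixed position $\uu$, a system of ODEs in time that must be solved on $[t,t+\Delta t]$ with the current concentrations $C_i(\uu,t)$ as initial data:
\begin{align}
\frac{\diff C_\A}{\diff t} &= -\kf C_\A C_\B + \kb C_\C, \nonumber\\
\frac{\diff C_\B}{\diff t} &= -\kf C_\A C_\B + \kb C_\C, \nonumber\\
\frac{\diff C_\C}{\diff t} &= \kf C_\A C_\B - \kb C_\C. \nonumber
\end{align}
First I would read off the two conserved quantities. Subtracting the second equation from the first gives $\tfrac{\diff}{\diff t}(C_\A-C_\B)=0$, and adding the first and third gives $\tfrac{\diff}{\diff t}(C_\A+C_\C)=0$. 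Hence $C_\A-C_\B$ and $C_\A+C_\C$ remain equal to their initial values, which are precisely the constants $c_{11}(\uu)$ and $c_{12}(\uu)$ defined in the statement.

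Using these invariants to substitute $C_\B=C_\A-c_{11}(\uu)$ and $C_\C=c_{12}(\uu)-C_\A$ reduces the system to a single autonomous Riccati (logistic-type) equation
\begin{align}
\frac{\diff C_\A}{\diff t} = -\kf\,C_\A^2 + \big(\kf c_{11}(\uu)-\kb\big)C_\A + \kb c_{12}(\uu). \nonumber
\end{align}
The crucial observation is that the discriminant of the right-hand side is exactly $c_2(\uu)^2$, so its two equilibria are $y_\pm=\big(\kf c_{11}(\uu)-\kb\pm c_2(\uu)\big)/(2\kf)$, with $y_+$ the stable one approached as $\Delta t\to\infty$. I would then integrate by separation of variables using partial fractions: since $-\kf(y_+-y_-)=-c_2(\uu)$, the quantity $\ln\big[(C_\A-y_+)/(C_\A-y_-)\big]$ evolves linearly with slope $-c_2(\uu)$, giving $(C_\A-y_+)/(C_\A-y_-)=K\exp(-c_2(\uu)\Delta t)$.

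Imposing the initial value $C_\A(\uu,t)$ fixes $K$; a direct simplification using $c_3(\uu)=C_\A(\uu,t)+C_\B(\uu,t)$ shows $K=-c_4(\uu)$ with $c_4(\uu)$ exactly as defined in the statement. Solving the relation $(C_\A-y_+)/(C_\A-y_-)=-c_4(\uu)\exp(-c_2(\uu)\Delta t)$ for $C_\A$ and resubstituting $y_\pm$ yields \eqref{eq:5a}. The two invariants then deliver the remaining components for free: $C_\B^{\mathrm{rc}}(\uu,t+\Delta t)=C_\A^{\mathrm{rc}}(\uu,t+\Delta t)-c_{11}(\uu)$, which after combining the fractions gives \eqref{eq:5b}, and $C_\C^{\mathrm{rc}}(\uu,t+\Delta t)=c_{12}(\uu)-C_\A^{\mathrm{rc}}(\uu,t+\Delta t)$, which is \eqref{eq:5c}.

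The conceptual steps---reduction via the two invariants, recognition of the scalar Riccati equation, and its elementary integration---are standard. The hard part will be purely the algebraic bookkeeping: tracking the $\pm c_2(\uu)$ labeling of the roots consistently, verifying the discriminant identity that defines $c_2(\uu)$, and carrying out the rearrangement that collapses $K=-c_4(\uu)$ together with the solved expression into the symmetric numerator/denominator form of \eqref{eq:5}. I would also note in passing the non-generic double-root case $c_2(\uu)=0$, which is recovered by a limiting argument.
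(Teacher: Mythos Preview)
Your proposal is correct and follows essentially the same route as the paper's own proof: identify the two conserved combinations $C_\A-C_\B=c_{11}$ and $C_\A+C_\C=c_{12}$, substitute to obtain a single quadratic ODE in $C_\A$, and integrate by separation of variables before recovering $C_\B$ and $C_\C$ from the invariants. The only cosmetic difference is that the paper integrates the separated equation directly to the logarithmic form \eqref{eq:46} rather than naming the roots $y_\pm$ and doing partial fractions explicitly, but this is the same computation.
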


	\begin{proof}
	Please refer to Appendix~\ref{app:2}.
	\end{proof}
	
	 In some applications,  the backward reaction can be very slow with respect to the time scale of interest. When   $\kb\rightarrow 0$ and $C_\A(\uu,t)=C_\B(\uu,t)$  hold, \eqref{eq:5a} and \eqref{eq:5b} have indeterminate forms. For this case, the values of $C_i^\mathrm{rc}(\uu,t+\Delta t)$  are given in the following corollary.
	\begin{corollary} \label{cor:2}
			For  $\kb=0$ and $C_\A(\uu,t=0)=C_\B(\uu,t=0)$, we have
		\begin{align} \label{eq:8a}
		C_\A^\mathrm{rc}(\uu,t+\Delta t)=&C_\B^\mathrm{rc}(\uu,t+\Delta t)=\frac{ C_\A(\uu,t)}{1+\kf \Delta t C_\A(\uu,t)}
		\end{align}
		and $C_\C^\mathrm{rc}(\uu,t+\Delta t)$ is still given by \eqref{eq:5c}.
	\end{corollary}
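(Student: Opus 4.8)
The plan is to bypass the indeterminate form in \eqref{eq:5a}--\eqref{eq:5b} by returning to the underlying reaction-only ordinary differential equations (ODEs) rather than taking a limit of Theorem~\ref{the:1}. Fixing the position $\uu$ and dropping the diffusion and release terms in \eqref{eq:27}, the reaction sub-step is governed by $\dot C_\A = \dot C_\B = -\kf C_\A C_\B + \kb C_\C$ and $\dot C_\C = \kf C_\A C_\B - \kb C_\C$, where the dot denotes the derivative with respect to the local time variable on $[t,t+\dt]$. Setting $\kb=0$ removes the coupling to $C_\C$ from the $\A$- and $\B$-equations, leaving the reduced pair $\dot C_\A = \dot C_\B = -\kf C_\A C_\B$.

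First I would establish that the symmetry $C_\A = C_\B$ is preserved throughout the interval, not merely at its start. Since $C_\A$ and $C_\B$ obey identical ODEs and, by hypothesis, share the same value at the left endpoint, uniqueness of solutions of the initial-value problem forces $C_\A^\mathrm{rc}(\uu,\tau)=C_\B^\mathrm{rc}(\uu,\tau)$ for every $\tau\in[t,t+\dt]$. This collapses the two coupled equations into the single scalar Bernoulli/Riccati equation $\dot C_\A = -\kf C_\A^2$.

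Next I would solve this scalar ODE by separation of variables: writing $-\diff C_\A/C_\A^2 = \kf\,\diff\tau$ and integrating across the interval yields $1/C_\A^\mathrm{rc}(\uu,t+\dt) - 1/C_\A(\uu,t) = \kf\dt$, which rearranges directly to \eqref{eq:8a}. The companion claim for $C_\C$ is then immediate from the conservation law $\dot C_\A + \dot C_\C = 0$ implied by \eqref{eq:27a} and \eqref{eq:27c}: the quantity $C_\A + C_\C$ is invariant under the reaction, so $C_\C^\mathrm{rc} = c_{12}(\uu) - C_\A^\mathrm{rc}$ exactly as in \eqref{eq:5c}, independently of this special case.

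I do not expect a genuine obstacle; the whole difficulty is bookkeeping. The only delicate point is that \eqref{eq:5a}--\eqref{eq:5b} degenerate to a $0/0$ form precisely when $c_{11}(\uu)=0$ and $\kb=0$ hold simultaneously, since then $c_2(\uu)\to 0$ and $c_4(\uu)\to -1$; a limiting argument would therefore have to fix the order in which $\kb\to 0$ and $C_\B\to C_\A$ are taken and invoke L'Hôpital's rule or a Taylor expansion. As a consistency check I would carry out exactly this limit on the general formula and confirm that it reproduces \eqref{eq:8a}, but the direct ODE route above avoids the indeterminacy altogether and is the cleaner derivation.
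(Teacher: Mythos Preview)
Your argument is correct and complete. The paper's own proof takes the other route you mention only as a consistency check: it applies L'H\^opital's rule directly to the closed-form expressions \eqref{eq:5a} and \eqref{eq:5b} of Theorem~\ref{the:1} in the limit $c_{11}(\uu)\to 0$ with $\kb=0$, recovering \eqref{eq:8a} as the limiting value of the $0/0$ indeterminate ratio. You instead abandon the general formula and re-solve the reaction-only ODE from scratch under the special hypotheses, using the conservation law $C_\A^{\mathrm{rc}}-C_\B^{\mathrm{rc}}=\mathrm{const}$ to reduce to the separable scalar equation $\dot C_\A=-\kf C_\A^{2}$. The paper's method has the minor virtue of confirming consistency with Theorem~\ref{the:1}; your direct derivation is more elementary, sidesteps any question about the order in which $\kb\to 0$ and $c_{11}\to 0$ are taken, and requires no limiting argument at all.
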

	
	\begin{proof}
		When $\kb\rightarrow 0$ and $C_\A(\uu,t)=C_\B(\uu,t)$, \eqref{eq:8a} is obtained by using L'Hospital's rule in \eqref{eq:5a} and \eqref{eq:5b} for $c_{11}(\uu)\rightarrow 0$.
	\end{proof}

	                                                                                                                       %\begin{corollary}
%	In the special case described in Corollary~\ref{col:4}, the time at which the maximum concentration of $\C$ expected is given by
%	\begin{align} \label{eq:38}
%	t_{\max}=
%	\end{align}
%\end{corollary} 
%\begin{proof}
%	We obtain \eqref{eq:38} by solving the equation of the derivative of $C_\C(\uu,t)$ given in \eqref{eq:34b} equal to zero.
%	\end{proof}                                                                                                                                            
	%%%%%%%%%%%%%%%%%%%%%%%%%%%%%%%%%%%%%%%%%%%%%%%%%%%%%%%%%%%%%%%%%%%%%%%%%%%%%%%%%%%%%%%%%%%%
\setcounter{eqnback1}{\value{equation}} \setcounter{equation}{19}
\begin{figure*}[!b]
	
	\hrulefill
	\begin{IEEEeqnarray}{lll} \label{eq:56a}
		W_{\B}^z(z,\tilde{z})=\frac{1}{2}+\sum_{n=1}^\infty \cos\left(\frac{n\pi z}{b}\right)\cos\left(\frac{n\pi \tilde{z}}{b}\right)\exp\left(-\frac{n^2\pi^2}{b^2}D \Delta t\right),\IEEEyesnumber\IEEEyessubnumber \label{eq:56}\\
		W_{\B}^\rho(\rho,\tilde{\rho})=\tilde{\rho}\left(1+\sum_{j=1}^\infty \frac{1 }{J_0^2\left(l_{j} \right)}J_0\left(\frac{l_{j}}{a}\rho \right)J_0\left(\frac{l_{j}}{a}\tilde{\rho} \right)\exp\left(-\frac{l_{j}^2}{a^2}D \Delta t\right)\right), \IEEEyessubnumber \label{eq:57}
	\end{IEEEeqnarray} 

\end{figure*}
%%%%%%%%%%%%%%%%%%%%%%%%%%%%%%%%%%%%%%%%%%%%%%%%%%%%%%%%%%%%%%%%%%%%%%%%%%%%%%%%%%%%%%%%%%%%%%
	
 \subsection{Detection with Unrestricted and Steady-Concentration Probes} \label{sub3:2}
 
 In this subsection, we consider the special case when $C_\B(\uu,t)$ is  very large and thus is not changed significantly by the reaction over time, i.e., $C_\B(\uu,t)$ is assumed to be constant over time. This assumption is similar to the assumption of constant enzyme concentration made in \cite{NCS:14:INB}. The assumption is applicable when the type $\B$ molecules have been released continuously over time from a position $\uu_\B$  such that a steady state is reached at the beginning of  information transmission. The steady state value of $C_\B(\uu,t)$   is given  by
 \setcounter{equation}{15}
 \begin{align} \label{eq:33}
 C_\B(\uu)&=\underset{t\rightarrow \infty}{\lim} \int_0^t C_\B(\uu,\tilde{t})~\diff \tilde{t}\\\nonumber
 &=\underset{t\rightarrow \infty}{\lim}\frac{N_\B}{4\pi D_\B ||\uu-\uu_\B||}\erfc\left(\frac{||\uu-\uu_\B||}{\sqrt{4 D_\B t}}\right)\\       \nonumber                          
 &= \frac{N_\B}{4\pi D_\B ||\uu-\uu_\B||},\nonumber
 \end{align}   
 where $\erfc(\cdot)$ is  the complementary error function.
 Then, if $\kb=0$, and $D_\A=D_\C$, we can obtain  closed-form expressions for $C_\A(\uu,t)$ and $C_\C(\uu,t)$, as given in the following corollary.
 
 \begin{corollary} \label{col:4}
 	Under the above assumption, the concentrations of the type $\A$ and $\C$ molecules are given respectively by
 	\begin{align}
 	\label{eq:34a}
 	C_\A(\uu,t)&=	\frac{N_\A}{(4\pi D_\A t)^{3/2}}\exp\left(-\frac{||\uu-\uu_\A||^2}{4 D_\A t}-\kf C_\B(\uu) t\right),\\
 	\label{eq:34b}
 	C_\C(\uu,t)&= \frac{N_\A}{(4\pi D_\A t)^{3/2}}\exp\left(-\frac{||\uu-\uu_\A||^2}{4 D_\A t}\right)-C_\A(\uu,t),
 	\end{align}
 	where $\uu_\A$ is the release position of the type $\A$ molecules.
 \end{corollary}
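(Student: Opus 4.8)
The plan is to exploit the two simplifying assumptions, $\kb=0$ and $D_\A=D_\C$, to collapse the coupled nonlinear system \eqref{eq:27} into two tractable linear problems. Because $C_\B(\uu,t)$ is taken to be time-invariant and equal to its steady-state value $C_\B(\uu)$ in \eqref{eq:33}, and because $\kb=0$ removes the production term $\kb C_\C$, the reaction contribution to \eqref{eq:27a} reduces to $-\kf C_\B(\uu)\,C_\A(\uu,t)$, which is \emph{linear} in $C_\A$. First I would therefore rewrite \eqref{eq:27a}, with the $\A$-release source $G_\A$ of \eqref{eq:28}, as a diffusion equation for $C_\A$ carrying a position-dependent first-order decay rate $\kf C_\B(\uu)$, now fully decoupled from $C_\B$ and $C_\C$.

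For this decay-diffusion equation I would follow the reaction-update logic underlying Algorithm~\ref{al:1}: over an infinitesimal interval the molecules either diffuse freely or are consumed at the local rate $\kf C_\B(\uu)$. Since $C_\B(\uu)$ does not vary in time, the degradation behaves as a pure first-order sink, for which the solution is the free-space diffusion profile of a point release of $N_\A$ molecules multiplied by the survival factor $\exp\!\left(-\kf C_\B(\uu)\,t\right)$. Writing out the three-dimensional Gaussian Green's function then yields \eqref{eq:34a}.

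Next I would recover $C_\C$ from an exact conservation argument. Adding \eqref{eq:27a} and \eqref{eq:27c} with $\kb=0$, the reaction terms $-\kf C_\A C_\B$ and $+\kf C_\A C_\B$ cancel, leaving $\partial_t\bigl(C_\A+C_\C\bigr)=G_\A+D_\A\nabla^2 C_\A+D_\C\nabla^2 C_\C$. Here the hypothesis $D_\A=D_\C$ is essential: it lets the two Laplacian terms combine into $D_\A\nabla^2\bigl(C_\A+C_\C\bigr)$, so that $C_\A+C_\C$ obeys a plain diffusion equation driven solely by $G_\A$. With no product present initially and $N_\C=0$, it follows that $C_\A+C_\C$ equals the free-diffusion concentration of the released $\A$ molecules, that is, precisely the Gaussian prefactor of \eqref{eq:34a} without the survival factor; subtracting $C_\A$ then gives \eqref{eq:34b}.

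The hard part will be justifying the product form in \eqref{eq:34a}: because $C_\B(\uu)$ varies in space, the diffusion operator and the multiplicative sink $\kf C_\B(\uu)$ do not commute, so $\exp\!\left(-\kf C_\B(\uu)\,t\right)$ times the Gaussian is the exact solution only when $C_\B$ is spatially uniform, and is otherwise a local-rate approximation that is accurate when $C_\B$ varies slowly over the diffusion length scale. I would make this step rigorous by appealing to the operator-splitting accuracy bound established in \cite{JFS:19:MBSC}. The companion expression \eqref{eq:34b} for $C_\C$, by contrast, needs no such approximation and follows exactly from the mass-balance identity above once $D_\A=D_\C$ is imposed.
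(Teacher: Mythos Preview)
Your approach is essentially the paper's: for \eqref{eq:34a} the paper simply cites \cite[Eq.~(9)]{NCS:14:INB} with $k_{-1}=0$, and for \eqref{eq:34b} it adds \eqref{eq:27a} and \eqref{eq:27c} and uses $D_\A=D_\C$ exactly as you do. Two minor remarks: the $\pm\kb C_\C$ terms in \eqref{eq:27a} and \eqref{eq:27c} already cancel without invoking $\kb=0$, so that hypothesis is needed only for the decoupling that yields \eqref{eq:34a}; and your observation that the product form in \eqref{eq:34a} is exact only when $C_\B$ is spatially uniform (and otherwise a local-rate approximation) is correct and in fact more careful than the paper, which simply defers the point to the cited reference.
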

 \begin{proof}
 	Please refer to Appendix~\ref{app:3}.                                                                    	\end{proof}

 \begin{remark}
 	Due to \eqref{eq:27c}, $C_\C(\uu,t)$ in \eqref{eq:34b} increases rapidly when $C_\B(\uu,t)$ is large.   For the general case, where $C_\B(\uu,t)$ reduces over time, $C_\C(\uu,t)$ given  in \eqref{eq:34b} is an upper bound.
 \end{remark}

	\subsection{Detection with Restricted Probes }
	
		To use the probe molecules efficiently, we consider an MC system where the molecular probes, i.e., type $\B$ molecules, are restricted to  volume $\Omega$ around the receiver. Hence, the type $\B$ molecules will not disperse in the environment and  more type $\C$ molecules are produced around the receiver. %In cylindrical coordinates, we consider $\Omega$ to be the volume bounded by $0\leq \rho\leq a$ and $0\leq z \leq b$. 
		Note that, type $\A$ and $\C$ molecules can diffuse in an unbounded environment. 
		In this case, the concentration of type $\A, \B$, and $\C$  molecules  can still be calculated with Algorithm~\ref{al:1} except that $C_{\B}(\uu,t)=0, \uu \notin \Omega$, and  $	C_{\B}^\mathrm{df}(\uu,t), \uu \in \Omega$, is updated as specified in the following Corollary.
		
		\begin{corollary}\label{cor:5}
			Using cylindrical coordinates, for a bounded volume, $C_{\B}^\mathrm{df}(\uu,t+\dt), \uu \in \Omega,$ is  given by
			\begin{align}\label{eq:55}
			C_{\B}^\mathrm{df}(\uu,t+\dt)&=\frac{4}{ a^2b}\int_{\tilde{\rho}=0}^{a}\int_{\tilde{z}=-\frac{b}{2}}^{\frac{b}{2}} C_{\B}(\tilde{\rho},\tilde{z},t)W_{\B}^z(z,\tilde{z})\nonumber\\
			&\quad \times W_{\B}^\rho(\rho,\tilde{\rho}) ~\diff \tilde{z}\diff \tilde{\rho},
			\end{align}
			where $W_{\B}^z(z,\tilde{z})$ and $W_{\B}^\rho(\rho,\tilde{\rho})$ are given by \eqref{eq:56a} at the bottom of this page,
			 $J_0(\cdot)$ is the zeroth order  Bessel function of the first kind, and $l_{j}$ is the positive root of $J'_0(l_{j})=0$.
		\end{corollary}	
		
\begin{proof}
	Please refer to Appendix~\ref{app:4}.                                                            
	        	\end{proof}
	
	Note that, for a given amount of released type $\B$ molecules, the smaller volume $\Omega$ is, the higher the concentration of the type $\B$ molecules in the volume $\Omega$ is. For a small volume $\Omega$, a slight increase in the released number  of type $\B$ molecules can result in a large increase in the concentration of the type $\B$ molecules  in the volume. This leads to a large increase in the concentration of the created type $\C$ molecules.

\subsection{Threshold Decision Rule} 
Let $s_n$ and $\hat{s}_n$ ($s_n,\hat{s}_n\in\left\{0,1\right\}$) denote the $n$-th transmitted bit and the $n$-th detected bit, respectively.  We adopt a simple threshold decision rule at the receiver where the receiver makes the decision on the transmitted bit based on a signal which is proportional to the number  of  type $\C$ molecules in its volume, denoted by $q$,  at the sampling time, denoted by $t_s$, as follows
\setcounter{equation}{20}
\begin{align} \label{eq:32}
\hat{s}_n=
\begin{cases}
0 \text{ if } q \leq \gamma, \\
1 \text{ if } q> \gamma,
\end{cases}
\end{align}
where $\gamma$ is the decision threshold.
We assume that the movements of the molecules are mutually independent,  and thus, $q$ approximately follows a Poisson distribution \cite{JAW:18:Arxiv} as follows
\begin{align} \label{eq:50}
q \sim \mathcal{P} \left(\bar{q}\right),
\end{align}
 where $\bar{q}$ is the mean of $q$ and given by \cite{JAS:17:CL} 
\begin{align}\label{eq:29}
\bar{q}=\int_{\uu\in \mathcal{V}^\Rx} C_\C(\uu,t_s)\diff \uu.
\end{align}
%and $\mathcal{V}^\Rx$ is the set of points within the receiver's volume.  
Note that since \eqref{eq:27} includes the impact of all releases, $C_\C(\uu,t_s)$ and thus $\bar{q}$ is affected by ISI and have different values for different sequences of $n$ bits, denoted by $\mathbf{s}_n=[s_1,s_2,\dots,s_n]$.  
%For simplicity, we approximate the concentration of $\C$ molecules at any point in the receiver's space by the concentration of $\C$ molecules at the center of the receiver, denoted by $C_\C^\Rx$, and thus $\bar{q}=V C_\C^\Rx$. 

The BER is  given by 
\begin{align}
\label{eq:30}
\Pe=\frac{1}{2}\Big(\Pr\left( q\leq \gamma| s_n=1\right)+1-\Pr\left( q\leq \gamma| s_n=0\right)\Big),
\end{align}
where the cumulative distribution function of the Poisson distribution can be expressed as
\begin{align}
\label{eq:31}
\Pr\left( q\leq \gamma| s_n\right)=\frac{1}{2^{n-1}}\sum_{\mathbf{s}_{n-1}\in\mathbb{S}}\bigg(\exp\left(-\bar{q}\right)\sum_{w=0}^\gamma\frac{\bar{q}^w}{w!}\bigg).
\end{align}
Here, $\mathbb{S}$ is the set of all possible values of $\mathbf{s}_{n-1}$ which affect $\bar{q}$.

	\section{Simulation Results}\label{sec:4}

	\begin{table*}[t!]
		\captionof{table}{System parameters used for the numerical results.}
		\centering
		\label{tab:1}
		\begin{tabular}{c|c||c|c}
			\toprule
			%    		& \multicolumn{5}{c}{$R$ [bits/s]} \\
			%    		\cline{2-6} \\[-0.2cm]
			Parameter	&	Value	&	Parameter	&	Value \\[-0.05cm]
			\midrule
			$D_\A$ [$\si{\meter^2\per\second}$]	                                       &	$10\times 10^{-10}$		& 
			$\kf$	[\si[inter-unit-product = \ensuremath{{}\cdot{}}]
			{molecules^{-1}.\meter^3.\second^{-1}}]  &	$10^{-22}$	 	        \\
			$D_\B$ [$\si{\meter^2\per\second}$] 	   &    $ 1.1\times 10^{-10}$	            &
				$\kb$ [$\si{\second^{-1}}$]  &	$10^{-26}$ 	            \\
			$D_\C$ [$\si{\meter^2\per\second}$] 	                                       &    $ 10^{-10}$	            &
			$b=2a$ [$\si{\meter}$]  &	$ 10^{-5} $ 	            \\
			$\dt$ [$\si{\second}$]	                                           &	$10^{-2}$               &
			$T$ [$\si{\second}$]	                                           &	$10$		            \\
			$\mathcal{V}_\Rx$ [$\si{\meter^3}$]	                                               &	$9.8\times 10^{-20}$       & 
			$R$ [$\si{\meter}$]	                                               &	$5\times 10^{-5}$       \\
			$N_\A [\si{molecules}]$                                        &    $5\times 10^8$          &
			$z_{\max}$ [$\si{\meter}$]	                                               &	$6R=3\times 10^{-4}$       \\ 
			$\uu_\Tx$ [$\si{\meter}$]	                                               &	$(0,0,R)$       & 
			$\uu_\Rx$ [$\si{\meter}$]                                        &    $(0,0,0)$          \\
			\bottomrule
		\end{tabular}
	\end{table*}

	In this section, we first confirm the accuracy of Algorithm~\ref{al:1} by  particle-based simulation. We   then use  Algorithm~\ref{al:1}  to analyze the concentration of the type $\C$ molecules for different scenarios. We also evaluate the system performance in terms of the BER and use Monte-Carlo simulation to confirm the analytical results.

	We simulate the system in a bounded environment using cylindrical coordinates with the limits for $\rho$ and $z$ large enough  to approximate an unbounded environment. Let  $z_{\max}$ characterize the boundary of the environment such that $0\leq \rho \leq z_{\max}, -z_{\max}\leq z \leq z_{\max}$.
	For all numerical results presented, we use the 
	parameters provided in Table I, unless otherwise stated. 
	%The parameters	in Table~\ref{tab:1} are chosen to match real system parameters, e.g.	the diffusion con
	%
	For fast detection, we consider fast forward reactions, e.g., reactions with half-life time on the order of minutes or seconds. The half-life time of a reaction, denoted by $t_{1/2}$, is defined as the time  for the concentration of the reactant to decrease to half of its original  value \cite{Cha:05:Book}, assuming the reactant is uniformly distributed. To select  suitable  parameter orders, we consider the case when the reactants are uniformly distributed and  assume that  the most significant change of concentration results from the forward reaction in \eqref{eq:1}, $t_{1/2}=\SI{1}{\second}$, and $C_\A(\uu,t) \ll C_\B(\uu,t)$, so that the type $\A$ molecules can react and be converted to type $\C$ molecules without a noticeable reduction of the number of type $\B$ molecules. Then, from \cite[Eq.~(9.14)]{Cha:05:Book}, we have 
	\begin{align} \label{eq:40}
	\kf t_{1/2}=\frac{1}{C_\B^0}
	\ln\frac{\left(C_\B^0-\frac{1}{2}C_\A^0\right) C_\A^0}{\frac{1}{2}C_\A^0 C_\B^0}\simeq\frac{\ln 2}{C_\B^0},
	\end{align}
	where $C_i^0=C_i(\uu,t=0)$.
	Adopting $C_\B^0= \SI{6e21}{molecules\per\meter^3}$ and binding constant $K_a=\kf/\kb=\SI[inter-unit-product = \ensuremath{{}\cdot{}}]{6e4}{molecules^{-1}.\meter^3}$ from \cite{LYY:18:CO},   based on \eqref{eq:40}, we chose $\kf$ and $\kb$ as in Table~\ref{tab:1} such that the resulting $K_a$ is on the same order as  $K_a$ in \cite{LYY:18:CO}. 
	
		\begin{figure}[t!] 
			\centering
			\includegraphics[width=\fscale\textwidth]{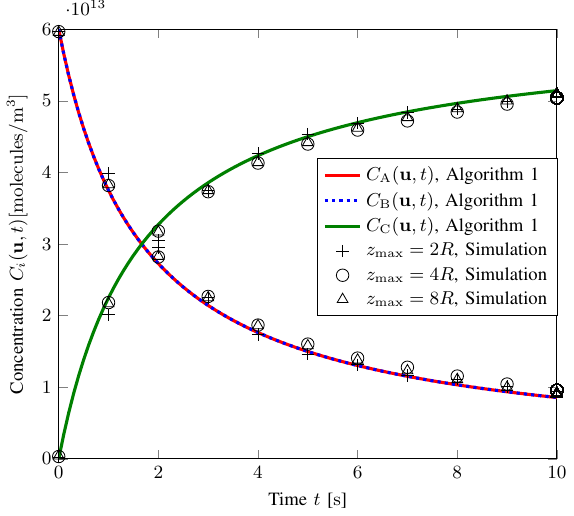}
			\caption{
				Concentrations of the type $\A, \B$,  and $\C$ molecules versus time, where the type $\A$ and $\B$ molecules are uniformly distributed in an approximately-unbounded environment limited by $z_{\max}$.}
			\label{fig:1}
		\end{figure}

\subsection{Verification of the Proposed Algorithm and Poisson Model}

		In Fig.~\ref{fig:1}, we use the particle-based simulation described in   \cite[Appendix~F]{JFS:19:MBSC} to confirm the accuracy of Algorithm~\ref{al:1}, i.e., the solution of the reaction diffusion equation \eqref{eq:27}. We assume that the type $\A$,  $\B$, and $\C$ molecules are uniformly distributed with $C_\A(\uu,t=0)=C_\B(\uu,t=0)=\SI{6e13}{molecules\per\meter^3}$, and $C_\C(\uu,t=0)=\SI{0}{molecules\per\meter^3}$, and they have the same diffusion coefficient $D_\A$ given in Table~1  such that  diffusion does not have any impact on the concentrations of the type $\A, \B$, and $\C$ molecules \footnote{If $D_\A>D_\B$, for example, after a reaction occured at a location, type $\A$ molecules may diffuse to that location faster than type $\B$ molecules and the concentrations of type $\A$ and $\B$ molecules at that location can be different.}. In order to reduce the computational complexity for particle-based simulation, we choose $\kf=\SI[inter-unit-product = \ensuremath{{}\cdot{}}]{e-14}{molecules^{-1}.\meter^3.\second^{-1}}$ and $\kb=\SI{e-18}{\second^{-1}}$. In Fig.~\ref{fig:1}, since the value of $\kf$ is larger than that of $\kb$ and $C_\A(\uu,t=0)=C_\B(\uu,t=0)$,  we observe that $C_\A(\uu,t)$ and $C_\B(\uu,t)$ are equal  and decrease over time while $C_\C(\uu,t)$ increases over time as expected. In general, the results obtained with Algorithm~\ref{al:1} are in good agreement with the simulation results. The simulation results become more accurate for larger $z_{\max}$, when the assumption of an unbounded environment becomes more justified.

In Fig.~\ref{fig:5}, we use particle-based simulation to confirm the accuracy of Corollary~\ref{col:4}, i.e., the solution of the diffusion equation in a bounded environment. We simulate the diffusion of the type $\B$ molecules with $N_\B=10^8 \si{molecules}$ and adopt  $b=2a= 10^{-5}\si{\meter} $. We compare the concentrations of the type $\B$ molecules  at the release point $\uu=\uu_\B=(0,0,0)$, half way to the boundary $\uu=(0,0,b/4)$, and at the boundary $\uu=(0,0,b/2)$. Due to the bounded environment, the concentration of the type $\B$ molecules  reaches a steady state, i.e., a constant value  everywhere in the environment, after a period of time. As expected, at the release point $\uu=(0,0,0)$, $C_\B$ decreases from the highest value at the  time of release to the steady value. At the half way point to the boundary $\uu=(0,0,b/4)$, $C_\B$ first increases from zero to a maximum, and then decreases to the steady value. At the boundary $\uu=(0,0,b/2)$, $C_\B$ increases from zero to the steady value. We also show $C_\B$ at the release point for an unbounded environments. The two curves for $C_\B$ at the release point for the bounded and unbounded environments are identical for small $t$. However, $C_\B$ for the unbounded environment  decreases to zero for large $t$. In general, the results obtained by Corollary~\ref{col:4} are in excellent agreement with the simulation results. 

\begin{figure}[!t]
	\centering
	\includegraphics[width=\fscale\textwidth]{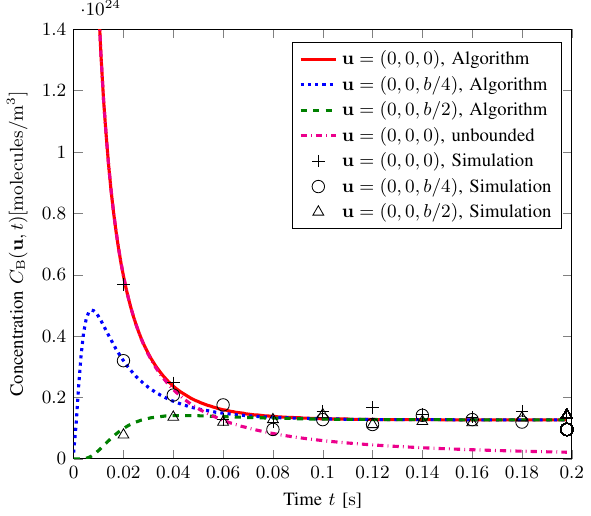}
	\caption{
		Concentrations of type $\B$ molecules at different positions in unbounded and bounded volumes.}
	\label{fig:5}
\end{figure}

In Fig.~\ref{fig:Poi}, we use particle-based simulation to validate the assumption that the number $q$ of type $\C$ molecules in the receiver volume is Poisson distributed, see \eqref{eq:50}. We use the same simulation setup as  for Fig.~\ref{fig:1} to obtain the histogram for $q$, i.e., an estimate for the true distribution. We compare this result with the probability mass function (PMF) of the Poisson distribution with the mean given in \eqref{eq:29}, $C_{\C}(\uu,t)$ obtained by using Algorithm~1, and a receiver volume  $\mathcal{V}^\Rx$ equal to $\SI{5.24e-13}{\meter^3}$.  In particular, in Fig.~\ref{fig:Poi}, we show the PMF of $q$ observed at times $t_s=[1,2,3] \si{\second}$. The corresponding mean values used for the PMF are $11.8, 17.1$, and $ 20.2$, respectively, which are obtained from $C_{\C}(\uu,t)$ shown in Fig.~\ref{fig:1}. We observe that the Poisson distributions with the mean obtained with Algorithm~1 are in good agreement with the histograms obtained by  particle-based simulation. Thus, the assumption of the Poisson model for the type $\C$ product molecules is justified.

		\begin{figure}[!t]
			\centering
			\includegraphics[width=\fscale\textwidth]{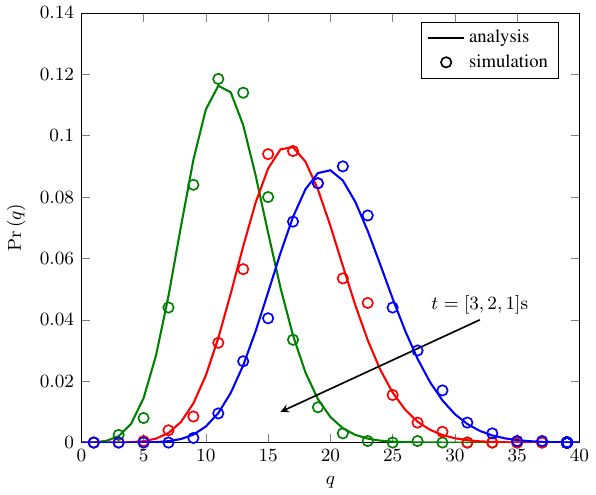}
			\caption{
				PMFs of the number $q$ of type $\C$ molecules in receiver volume $\mathcal{V}^\Rx=\SI{5.24e-13}{\meter}$ at three sample times $t_s=[1,2,3] \si{\second}$.}
			\label{fig:Poi}
		\end{figure}

			\begin{figure}
				\centering
				\includegraphics[width=\fscale\textwidth]{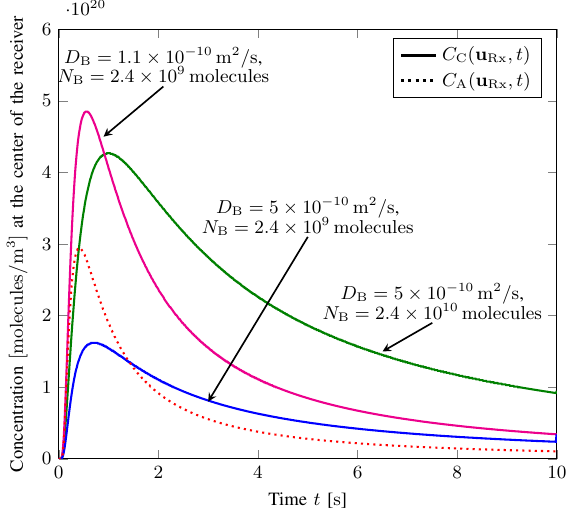}
				\caption{Concentration of type $\A$ and $\C$ molecules at the center of the receiver versus time for direct  and indirect detection, respectively, when the type $\B$ molecules are not restricted. For the latter case, different values of $D_\B$ and $N_\B$  are considered.}
				\label{fig:2}
			\end{figure}

		\subsection{Impact of System Parameters on  Concentration of Type $\C$ Molecules} 
	
		In Fig.~\ref{fig:2}, we present  the concentrations of the type $\A$ and $\C$ molecules at the center of the receiver versus time for   direct and indirect detection, respectively, when the type $\B$ molecules are not restricted. For the proposed indirect detection, we study the impact of the diffusion coefficient and the number of type $\B$ molecules released at the center of the receiver on the concentration of the type $\C$ molecules. In particular, we assume $\uu_\B=\uu_\Rx$, $D_\B=[1.1, 5] \times 10^{-10}\si{\meter^2\per \second}$, and $N_\B=[2.4, 24]\times 10^9\si{molecules} $.  We observe that when the type $\A$ molecules cannot be detected directly, and thus, the proposed indirect detection is used, the  concentration of the type $\C$ molecules  at the center of the receiver, $C_\C(\uu_\Rx,t)$, has a similar characteristic, i.e.,  a single peak and a long tail, as $C_\A(\uu_\Rx,t)$ when the type $\A$ molecules can be directly detected. Let $\max_t C_i(\uu_\Rx,t), i\in\left\{\A,\C\right\}$, denote the peak value of $C_i(\uu_\Rx,t)$, i.e., the maximum value of $C_i(\uu_\Rx,t)$ over time.
		For larger $N_\B$ and a given $D_\B$, e.g., $D_\B=\SI{5e-10}{\meter^2\per \second}$, $\max_t C_\C(\uu_\Rx,t)$ is larger. This is expected since a larger number of type $\B$ molecules  produce a larger number of type $\C$ molecules at the receiver. $\max_t C_\C(\uu_\Rx,t)$ can even exceed $\max_t C_\A(\uu_\Rx,t)$ because $D_\C<D_\A$ and type $\C$ molecules diffuse away from the receiver more slowly than type $\A$ molecules. However, the tail of $C_\C(\uu_\Rx,t)$ can be  heavier than that of $C_\A(\uu_\Rx,t)$. In particular, for $N_\B=\SI{2.4e9}{molecules} $, although $\max_t C_\C(\uu_\Rx,t)<\max_t C_\A(\uu_\Rx,t)$, the tail of $C_\C(\uu_\Rx,t)$ is heavier than that of $C_\A(\uu_\Rx,t)$ for the considered range of time $t$. This may negatively effect  system performance due to the increased level of ISI as will be shown in Fig.~\ref{fig:4}.
		  Nevertheless, using the type $\C$ molecules for detection is unavoidable when direct detection of the type $\A$ molecules is \emph{impossible}. Moreover, for a given $N_\B$, e.g., $N_\B=\SI{2.4e9}{molecules} $,  and smaller $D_\B$, e.g., $D_\B=\SI{1.1e-10}{\meter^2\per \second}$, $\max_t C_\C(\uu_\Rx,t)$ is also larger. Smaller $D_\B$ result in slower diffusion of type $\B$ molecules from the receiver. Thus, there are more type $\B$ molecules near the receiver to react  and produce more type $\C$ molecules. However, due to the larger number of reactions, the amount of the type $\B$ molecules in the environment reduces significantly and thus less type $\C$ molecules are produced at later times, which results in a lighter tail of $C_\C(\uu_\Rx,t)$, i.e., less ISI. 
		  
		  	 	\begin{figure}[!t]
		  	 		\centering
		  	 		\includegraphics[width=\fscale\textwidth]{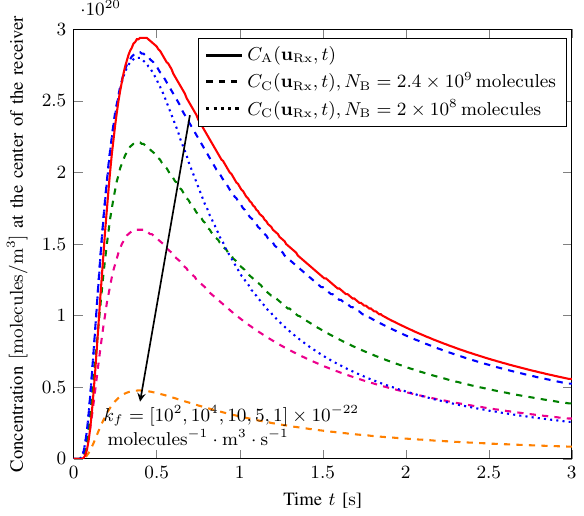}
		  	 		\caption{
		  	 			Concentration of  type $\A$ and $\C$ molecules at the center of the receiver versus time for  direct and indirect detection, respectively, when the type $\B$ molecules are not restricted. }
		  	 		\label{fig:kf1}
		  	 	\end{figure}
%		  	 	\begin{figure}[!t]
%		  	 		\centering
%		  	 		\includegraphics[width=\fscale\textwidth]{fig_kf_210615.pdf}
%		  	 		\caption{
%		  	 			Concentration of  the $\A$ and $\C$ molecules at the center of the receiver versus time for the cases of direct and indirect detection, respectively. For indirect detection,  $N_{\B}=\SI{2e8}{molecules} $ and different $k_f$ are considered. }
%		  	 		\label{fig:kf2}
%		  	 	\end{figure}
	In Fig.~\ref{fig:kf1}, we investigate the impact of $k_f$ and $N_\B$ on $C_{\C}(\uu_\Rx,t)$ when the type $\B$ molecules are not restricted. In particular, we present the concentrations of  the type $\A$ and $\C$ molecules at the center of the receiver versus time for direct detection and indirect detection with $k_b=0$, $D_\A=D_\B=D_\C=\SI{1e-9}{\meter^2\per \second}$. We consider	$N_{\B}=\SI{2.4e9}{molecules} $ and $k_f=[100,10,5,1]\times10^{-22}\si[inter-unit-product = \ensuremath{{}\cdot{}}]
	{molecules^{-1}.\meter^3.\second^{-1}}$ as well as	$N_{\B}=\SI{2e8}{molecules} $ and $k_f=10^{-18}\si[inter-unit-product = \ensuremath{{}\cdot{}}]
	{molecules^{-1}.\meter^3.\second^{-1}}$. We observe that the concentration of the type $\C$ molecules increases quickly after the release of the type $\A$ and $\B$ molecules and the peak value of the concentration increases when $k_f$ increases. This means more type $\C$ molecules are created when the reaction rate is larger. The concentration of the type $\C$ molecules approaches the concentration of the type $\A$ molecules for large $k_f$. For $N_{\B}=\SI{2.4e9}{molecules} $ and $k_f=10^{-20}\si[inter-unit-product = \ensuremath{{}\cdot{}}]	{molecules^{-1}.\meter^3.\second^{-1}}$, $C_{\C}(\uu_\Rx,t)\approx C_{\A}(\uu_\Rx,t)$.  This means $k_f$ is large enough such that the reaction between type $\A$ and $\B$ molecules happens immediately whenever they come close to each other. Thus, most of the type $\A$ molecules will become type $\C$ molecules. Increasing $k_f$ further would not result in any significant change of $C_{\C}(\uu_\Rx,t)$. Note that the immediate reaction alone is not enough to yield  $C_{\C}(\uu_\Rx,t)\approx C_{\A}(\uu_\Rx,t)$ as $D_\A=D_\C$ is also needed for the created type $\C$ molecules to diffuse in the same manner  as the type $\A$ molecules.  For a smaller $N_{\B}$, e.g., $N_{\B}=\SI{2e8}{molecules} $, and sufficiently large $k_f$, e.g., $k_f=10^{-18}\si[inter-unit-product = \ensuremath{{}\cdot{}}]
	{molecules^{-1}.\meter^3.\second^{-1}}$, the peak values of $C_{\C}(\uu_\Rx,t)$ and $ C_{\A}(\uu_\Rx,t)$ are approximately identical but the tail of  $C_{\C}(\uu_\Rx,t)$ decays faster than the tail of $ C_{\A}(\uu_\Rx,t)$. Here, not all type $\A$ molecules are converted to type $\C$ molecules because there are not enough type $\B$ molecules available to react with type $\A$ molecules at  later times.

		  	 	\begin{figure}[!t]
		  	 		\centering
		  	 		\includegraphics[width=\fscale\textwidth]{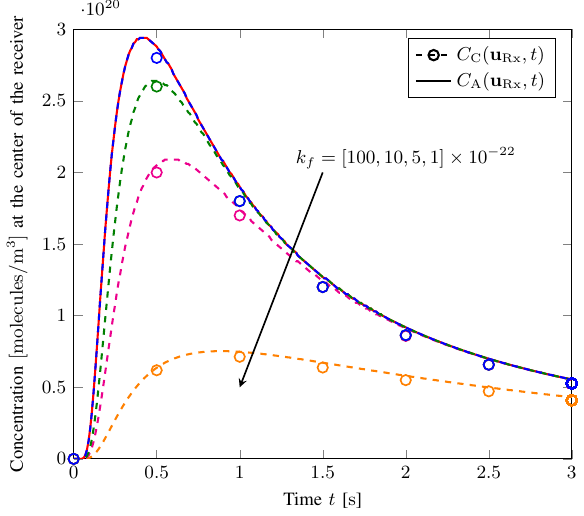}
		  	 		\caption{
		  	 			Concentration of  the type $\A$ and $\C$ molecules at the center of the receiver versus time for direct and indirect detection, respectively, when the concentration of the unrestricted type $\B$ molecules is constant. For $C_{\C}(\uu_\Rx,t)$, the dashed lines and markers denote the closed-form expression in \eqref{eq:34b} and analysis obtained with Algorithm~1, respectively.  }
		  	 		\label{fig:Buni1}
		  	 	\end{figure}
%		  	 	\begin{figure}[!t]
%		  	 		\centering
%		  	 		\includegraphics[width=\fscale\textwidth]{fig_kf_Buni_210615.pdf}
%		  	 		\caption{
%		  	 			Concentration of  the $\A$ and $\C$ molecules at the center of the receiver versus time for the cases of direct and indirect detection, respectively. For indirect detection, we assume $\B$ molecules are uniformly distributed and $ k_f=10^{-22}\si[inter-unit-product = \ensuremath{{}\cdot{}}]
%		  	 			{molecules^{-1}.\meter^3.\second^{-1}}$. }
%		  	 		\label{fig:Buni2}
%		  	 	\end{figure}
		  	 	
In Fig.~\ref{fig:Buni1}, we study the impact of $k_f$  on $C_{\C}(\uu_\Rx,t)$ for the special setup considered in Subsection~\ref{sub3:2}, where  the  concentration of the type $\B$ molecules is assumed to be large and thus remains unchanged when  reacting with the type $\A$ molecules. We further assume $k_b=0$, $D_\A=D_\C=\SI{1e-9}{\meter^2\per \second}$, $ C_{\B}(\uu_\Rx)=\SI{5e-21}{molecules\per\meter^3}$, and $k_f$ varies. The dashed lines represent the closed-form expression of $C_{\C}(\uu_\Rx,t)$ in \eqref{eq:34b}. The markers denote $C_{\C}(\uu_\Rx,t)$ obtained with Algorithm~1. The solid lines represent  $C_{\A}(\uu_\Rx,t)$ for direct detection. First, we observe  that the derived close-form results are in excellent agreement with the results obtained by Algorithm~1. We also observe  that $C_{\C}(\uu_\Rx,t)$  approaches $C_{\A}(\uu_\Rx,t)$ for large $k_f$, as in Fig.~\ref{fig:kf1}.  Interestingly,  $C_{\C}(\uu_\Rx,t)$ can be equal to $C_{\A}(\uu_\Rx,t)$ in Fig.~\ref{fig:Buni1} whereas there is a small difference between them in Fig.~\ref{fig:kf1}.  This is because for the case considered in Fig.~\ref{fig:Buni1}, $C_{\B}(\uu_\Rx)$ remains constant and all type $\A$ molecules can react with type $\B$ molecules to create type $\C$ molecules. %Note that for the setup of Figs.~\ref{fig:kf1},  even infinitely large $N_\B$ would still not result in $C_{\C}(\uu_\Rx,t)=C_{\A}(\uu_\Rx,t)$ because it takes time for $\B$ molecules to diffuse from the receiver and meet $\A$ molecules released from the transmitter. Moreover, we observe from  Figs.~\ref{fig:Buni1} and \ref{fig:Buni2} that the time of the peak of  $C_{\C}(\uu_\Rx,t)$, denoted by $t_{\max}$, is larger and decreases to that of $C_{\A}(\uu_\Rx,t)$   when $k_f$ increases, which  contrasts with that in Figs.~\ref{fig:kf1} and \ref{fig:kf2}. The reason is as follows. Recall that the time for maximum $C_{\A}(\uu_\Rx,t)$ occurs at $t_{\max}=\frac{R^2}{6D_\A}$ \cite{NCS:14:INB} , where $R$ is the distance between the release and receive points. This applies for $\C$ molecules,  assuming the  release point where they are created, i.e., where $\A$ and $\B$ molecules react. Thus, in the setup in Figs.~\ref{fig:kf1} and \ref{fig:kf2} where $\B$ molecules are released at the receiver, the reaction  more likely happens near the receiver, i.e., $R$ is likely small. Hence, $t_{\max}$ is small. On the other hand, in the setup in  Figs.~\ref{fig:Buni1} and \ref{fig:Buni2}, $\B$ molecules are everywhere and the reaction more likely happens near the transmitter where $\A$ molecules are released, i.e., $R$ is likely large and $t_{\max}$ is large.
		 
		 	\begin{figure}[!t]
		 		\centering
		 		\includegraphics[width=\fscale\textwidth]{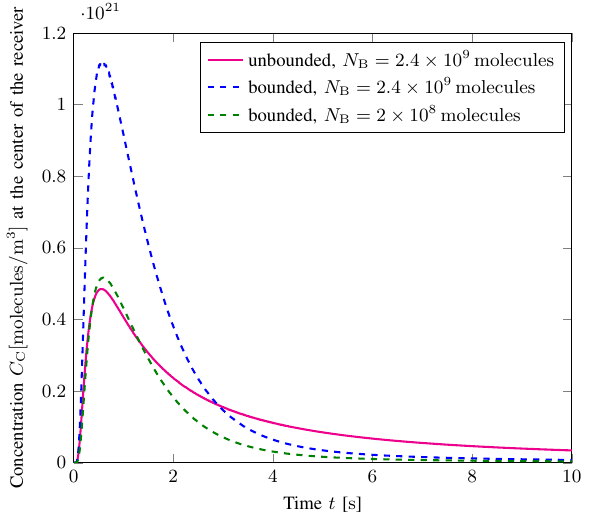}
		 		\caption{
		 			Concentration of  type $\C$ molecules at the center of the receiver versus time when type $\B$ molecules are released in  unbounded and bounded volumes. }
		 		\label{fig:6}
		 	\end{figure}

In Fig.~\ref{fig:6}, we present the concentrations of the type $\C$ molecules at the center of the receiver versus time when the type $\B$ molecules are and are not confined to a bounded volume. For the same number of type $\B$ molecules released, i.e., $N_\B=\SI{2.4e9}{molecules}$, the concentration of the type $\C$ molecules has a much higher peak value and decreases faster when the type  $\B$ molecules are restricted to a bounded volume compared to the case when they are not. For a much smaller number of type $\B$ molecules, i.e., $ N_\B=\SI{2e8}{molecules}$, confined to the bounded volume, the same peak value and a faster decrease of $C_\C$ can be obtained  compared to the case when the type $\B$ molecules  are in an unbounded volume.   Therefore, ISI is less severe when the type $\B$ molecules are restricted in a bounded volume. The reason is that,  outside the bounded volume, no reaction happens and thus no type $\C$ molecules are created  which may arrive at the receiver later and contribute to the ISI.

%\begin{figure*}[!tbp]
%	\vspace{-0.6cm}
%	\centering
%	\hspace{-1.4cm}
%	\subfloat[]{
%		\includegraphics[width=\fscale\textwidth]{fig_kf_N9_210615.pdf}
%		%		\subfloat[]{
%		%			\includegraphics[scale=0.5, trim={0 0.6cm 0 0.6cm},clip]{linear_eq.eps}
%		\label{fig:2}
%	}
%	%	\hspace{-0.8cm}
%	\subfloat[]{
%		\includegraphics[width=\fscale\textwidth]{fig_kf_210615.pdf}
%		\label{fig:4}
%	}
%	\hspace{-1.3cm}
%	\caption{Diagram of a receiver with
%		(a) a linear equalizer and a threshold detector, (b) a decision feedback equalizer and a threshold detector. 
%		%\vspace*{-5mm}
%	}
%	\label{fig:dia}
%	%\vspace{-0.5cm}
%\end{figure*}

		 \subsection{ System Performance}
		
				\begin{figure}[!t]
					\centering
					\includegraphics[width=\fscale\textwidth]{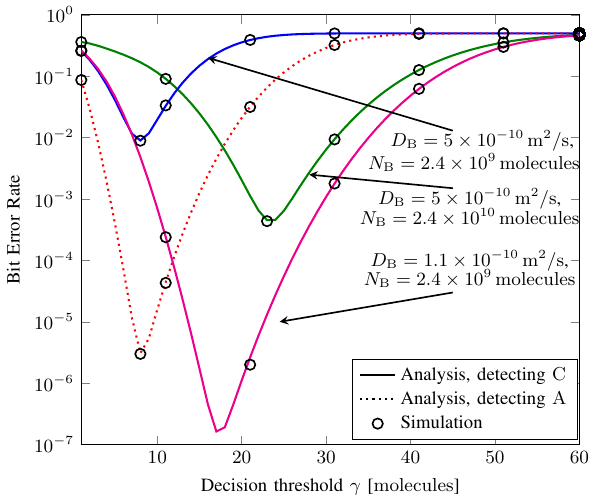}
					\caption{
						BER versus  decision threshold $\gamma$ for direct detection via the type $\A$ molecules and indirect detection via the type $\C$ molecules.}
					\label{fig:4}
				\end{figure}

		Fig.~\ref{fig:4} depicts the BER of the considered MC system versus  decision threshold, $\gamma$, for direct  and indirect detection. The type $\B$ molecules are  released at the center of the receiver.
		We take the ISI caused by the previous two symbols into account, i.e., $s_n$ is interfered by $s_{n-1}$ and $s_{n-2}$, and the bit interval $T=\SI{10}{\second}$ is long enough, such that the contribution of the other previous symbols, e.g., $s_{n-3}$, to the ISI is negligible. We choose the sampling time $t_s$ equal to the time when $C_\C(\uu_\Rx,t)$ assumes its maximum value, $\max_t C_\C(\uu_\Rx,t)$. From Fig.~\ref{fig:4}, we  observe that the analytical results obtained with \eqref{eq:30} and \eqref{eq:31} are in excellent agreement with the corresponding Monte-Carlo simulation results. Furthermore, the BER can be reduced significantly by optimizing the  decision threshold. We also observe that although $ C_\C(\uu_\Rx,t)$ for the case of $D_\B=\SI{1.1e-10}{\meter^2\per \second}$ and $N_\B=2.4\times 10^9$ molecules has the highest peak value in Fig.~\ref{fig:2}, the corresponding optimal decision  threshold is smaller than that for the case of $D_\B=\SI{5e-10}{\meter^2\per \second}$ and $N_\B=2.4\times 10^{10}$ molecules. This is due to the fact that the optimal threshold depends on both the peak value and the tail of $C_\C(\uu_\Rx,t)$. Moreover, when $N_\B$ increases or $D_\B$ decreases, the minimum BER value decreases due to the reduced ISI. When direct detection is not possible, the addition of the type $\B$ molecules makes  detection via the type $\C$ molecules possible even if the resulting BER may be higher compared to the case when direct detection is possible. However, when the released type $\B$ molecules are appropriately chosen, e.g.,  $D_\B=\SI{1.1e-10}{\meter^2\per \second}$ and $N_\B=2.4\times10^9$ molecules, the proposed indirect detection approach can even achieve a lower BER than direct detection. %We note that the higher BER of indirect detection for the other choices of $D_\B$ and $N_\B$ is also partly due to the suboptimal threshold decision rule. More complex decision rules (usually called detectors without the need to distinguish with the physical detection mechanism) with higher performance \cite{Tra:21:COML} may be applied. In the future work, we plan to design more sophisticated decision rules for the proposed indirect detection mechanism to further improve  system performance.  

	 Fig.~\ref{fig:7} shows the BER obtained with the optimal decision threshold versus the symbol interval length $T$  when the type $\B$ molecules are released in  unbounded and bounded volumes. As $T$ increases, the impact of ISI reduces and thus the BER decreases. A larger number of released type $\B$ molecules results in a lower BER. For the same $N_\B$, the BER with bounded type  $\B$ molecules is lower than that with unbounded type  $\B$ molecules because the ISI is less severe as explained in the discussion of Fig.~\ref{fig:6}. The same BER and  a higher data rate, i.e., smaller time intervals, can be achieved with fewer type $\B$ molecules released in a bounded volume compared to when their movement is not restricted.  This illustrates the efficient use of  resources when the type $\B$ molecules are restricted to a volume around the receiver.

			\begin{figure}[!t]
				\centering
				\includegraphics[width=\fscale\textwidth]{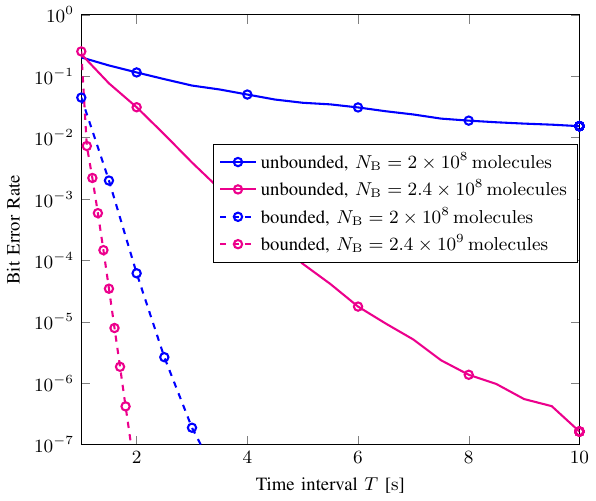}
				\caption{
					BER obtained with the optimal decision threshold versus symbol interval $T$  when type $\B$ molecules are released in  unbounded and bounded volumes.}
				\label{fig:7}
			\end{figure}
			
			Fig.~\ref{fig:8} presents the BER obtained with the optimal decision threshold versus the number of released type $\B$ molecules $N_\B$ for different symbol interval lengths $T$. We observe that for larger $T$, the BER decreases faster as $N_\B$ increases. For the same $T$, for bounded type $\B$ molecules, the BER is lower and decreases faster as $N_\B$ increases compared to  unbounded type $\B$ molecules.
			\begin{figure}[!t]
				\centering
				\includegraphics[width=\fscale\textwidth]{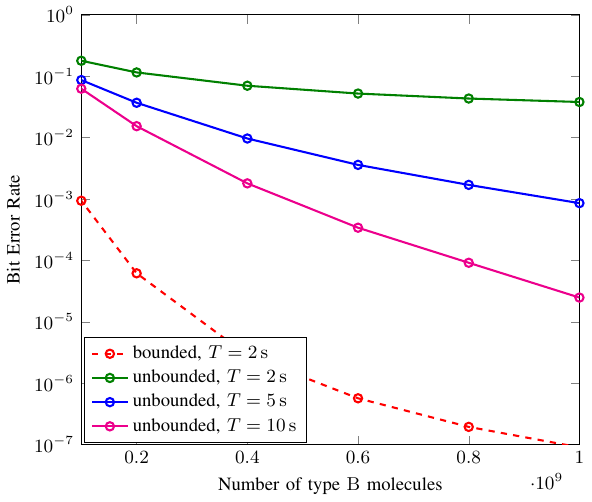}
				\caption{
					BER obtained with the optimal decision threshold versus the number of released type $\B$ molecules $N_\B$ for different symbol interval lengths $T$.}
				\label{fig:8}
			\end{figure}
\section{Conclusions}
In this work, we proposed a novel detection mechanism for  MC systems where the signaling molecules cannot be directly detected at the receiver. Therefore,  a molecular probe was introduced to react with the signaling  molecules to produce  product molecules that can then be detected at the receiver. The molecular probes needed for the proposed detection mechanism can be deployed differently, i.e., as unrestricted probes and as 	probes restricted in a small volume around the receiver. For system performance analysis, we developed an efficient iterative algorithm to find the spatio-temporal concentration of the product molecules taking into account diffusion and reactions. For the special case where the probe concentration is constant over time, closed-form expressions for the concentrations were derived. Our results showed that  the concentration of the product molecules exhibits a similar characteristic over time as the concentration of the signaling molecules.  We  analyzed the  performance of the MC system using the proposed detection scheme in terms of  BER.  Our results  showed that the BER for indirect detection can be significantly improved by optimizing the decision threshold and can even be lower than the BER for direct detection if the molecular probe is suitably chosen. Moreover, when the molecular probe is kept in a volume around the receiver, the same BER and a higher data rate can be achieved with  much fewer molecular probes  compared to when the molecular probes are not restricted.
% In addition to MC applications, the results of this work can also be used for designing detection systems e.g.,  detecting toxic chemicals in the environment and evaluating the accuracy of the detection. Moreover, in some cases, the measurable signal produced by the reaction may only be generated during the reaction and  disappears later. Hence,  ISI can be reduced by using the proposed detection scheme, which is an exciting topic to be considered in future work.

\appendices
%	\section*{Appendices}
\renewcommand{\thesectiondis}[2]{\Alph{section}:}
\section{Proof of Corollary~\ref{cor:1}}\label{app:1}
We can derive \eqref{eq:35}  by expanding \eqref{eq:25} in cylindrical coordinates with $\tilde{\uu}=(\tilde{\rho},\tilde{\phi},\tilde{z})$ and substituting 
\begin{align}
||\uu-\tilde{\uu}||^2=\left(z-\tilde{z}\right)^2+\rho^2+\tilde{\rho}^2-2\rho\tilde{\rho}\cos\left(\tilde{\phi}\right)
\end{align}
  into \eqref{eq:25}. Then, by using 
  \begin{align}
  \int_{\tilde{\phi}=0}^{2\pi}\exp\left(\frac{2\rho\tilde{\rho}\cos\left(\tilde{\phi}\right)}{4D_i \dt}\right)\diff \tilde{\phi}=2\pi I_0\left(\frac{\rho\tilde{\rho}}{2 D_i \Delta t}\right),
  \end{align}
   we obtain \eqref{eq:35}.
%\begin{align}
%\label{eq:39}
%C_i^\mathrm{df}(\uu,t+\dt)=\frac{1}{\left(4\pi D_i \dt\right)^{3/2}}\int_{\tilde{\rho}=0}^\infty\int_{\tilde{\phi}=0}^{2\pi}\int_{\tilde{z}=-\infty}^\infty C_i(\tilde{\rho},\tilde{\phi},\tilde{z},t)\exp\left(-\frac{\left(z-\tilde{z}\right)^2+\rho^2+\tilde{\rho}^2-2\rho\tilde{rho}\cos\left(\tilde{\rho}\right)}{4D_i\dt}\right)\tilde{\rho}\diff \tilde{\rho}\diff\tilde{\phi}\diff\tilde{z}\nonumber\\

	\section{Proof of Theorem~\ref{the:1}}\label{app:2}
	
	We obtain \eqref{eq:5} by following similar steps as  in \cite[Appendix~D]{JFS:19:MBSC} to solve the following set of equations		
	\begin{IEEEeqnarray}{lll} \label{eq:2}
		\frac{\partial C_\A^\mathrm{rc}(\uu,t)}{\partial t}=-\kf C_\A^\mathrm{rc}(\uu,t) C_\B^\mathrm{rc}(\uu,t) +\kb C_\C^\mathrm{rc}(\uu,t),\IEEEyesnumber\IEEEyessubnumber \label{eq:2a}\\
		\frac{\partial C_\B^\mathrm{rc}(\uu,t)}{\partial t}=-\kf C_\A^\mathrm{rc}(\uu,t) C_\B^\mathrm{rc}(\uu,t) +\kb C_\C^\mathrm{rc}(\uu,t),\IEEEyessubnumber\label{eq:2b}\\
		\frac{\partial C_\C^\mathrm{rc}(\uu,t)}{\partial t}=\kf C_\A^\mathrm{rc}(\uu,t) C_\B^\mathrm{rc}(\uu,t) -\kb C_\C^\mathrm{rc}(\uu,t). \IEEEyessubnumber \label{eq:2c}
	\end{IEEEeqnarray}
	Subtracting \eqref{eq:2b} from \eqref{eq:2a} and adding \eqref{eq:2a} and \eqref{eq:2c}, respectively, we obtain
	\begin{align} \label{eq:42}
	\frac{\partial\left( C_\A^\mathrm{rc}(\uu,t)- C_\B^\mathrm{rc}(\uu,t)\right)}{\partial t}=0,
	\end{align}
		\begin{align}  \label{eq:43}
				\frac{\partial\left( C_\A^\mathrm{rc}(\uu,t)+ C_\C^\mathrm{rc}(\uu,t)\right)}{\partial t}=0.
		\end{align}
	Equations  \eqref{eq:42} and \eqref{eq:43} have  solutions $C_\A^\mathrm{rc}(\uu,t)- C_\B^\mathrm{rc}(\uu,t)=c_{11}(\uu)$ and $C_\A^\mathrm{rc}(\uu,t)+ C_\C^\mathrm{rc}(\uu,t)=c_{12}(\uu)$, where $c_{11}(\uu)=C_\A^\mathrm{rc}(\uu,t=t_0)- C_\B^\mathrm{rc}(\uu,t=t_0)$ and   $c_{12}(\uu)=C_\A^\mathrm{rc}(\uu,t_0)+ C_\C^\mathrm{rc}(\uu,t_0)$, respectively. Here, $t_0$ is the initial time for which the initial conditions are known. Substituting $C_\A^\mathrm{rc}(\uu,t)- C_\B^\mathrm{rc}(\uu,t)=c_{11}(\uu)$ and $C_\A^\mathrm{rc}(\uu,t)+ C_\C^\mathrm{rc}(\uu,t)=c_{12}(\uu)$ into \eqref{eq:2a}, we have
	\begin{align}  \label{eq:44}
	\frac{\partial C_\A^\mathrm{rc}(\uu,t)}{\partial t}&=-\bigg(\kf \left({C_\A^\mathrm{rc}}(\uu,t)\right)^2+\left(-\kf c_{11}(\uu)+\kb\right) \nonumber\\
	&\quad \times C_\A^\mathrm{rc}(\uu,t) -\kb c_{12}(\uu)\bigg),
	\end{align}
	which can be rewritten as
	\begin{align} \label{eq:45}
	&\frac{\partial\left( C_\A^\mathrm{rc}(\uu,t)\right)}{\kf \left({C_\A^\mathrm{rc}}(\uu,t)\right)^2+\left(-\kf c_{11}(\uu)+\kb\right)C_\A^\mathrm{rc}(\uu,t) -\kb c_{12}(\uu)}\nonumber\\
	&\hspace{6cm}=-\partial t.
	\end{align}
	Integrating both sides of \eqref{eq:45} and using \cite{GR:07:Book}
	\begin{align}
	\int \frac{\diff x}{ax^2+bx+c}=\frac{1}{\Delta}\log\left(\frac{\Delta-2ax-b}{\Delta+2ax+b}\right),
	\end{align}
	where $\Delta=\sqrt{b^2-4ac}$, we obtain
	\begin{align} \label{eq:46}
	-t+&\tilde{c}_4(\uu)=\\\nonumber
	&\frac{1}{c_2(\uu)}\ln\left(\frac{c_2(\uu)-2\kf C_\A^\mathrm{rc}(\uu,t)+\kf c_{11}(\uu)-\kb}{c_2(\uu)+2\kf C_\A^\mathrm{rc}(\uu,t)-\kf c_{11}(\uu)+\kb}\right),
	\end{align}
	where $c_2(\uu)=\sqrt{\left(-\kf c_{11}(\uu) +\kb\right)^2+4\kf\kb c_{12}(\uu)}$ and $\tilde{c}_4(\uu)$ is a constant with respect to time. Using the initial condition when $t=t_0$ in \eqref{eq:46} leads to
	\begin{align}
	\label{eq:47}
	\tilde{c}_4(\uu)&=\frac{1}{c_2(\uu)}\ln\left(\frac{c_2(\uu)-\kf c_3(\uu) -\kb}{c_2(\uu)+\kf c_3(\uu)+\kb}\right),
	\end{align} 
	where $c_3(\uu)=C_\A^\mathrm{rc}(\uu,t_0)+C_\B^\mathrm{rc}(\uu,t_0)$.
	Defining $c_4(\uu)=\exp\left(c_2(\uu) \tilde{c}_4(\uu)\right)$,  substituting  \eqref{eq:47} into \eqref{eq:46}, and setting the initial time and the current time, denoted by $t_0$ and $t$ in this proof, equal to $t$ and $\Delta t+t$,  respectively, for each iteration in Algorithm~1, we obtain \eqref{eq:5a}. Using $C_\B^\mathrm{rc}(\uu,t)=C_\A^\mathrm{rc}(\uu,t)- c_{11}(\uu) $ and $ C_\C^\mathrm{rc}(\uu,t)=c_{12}(\uu)-C_\A^\mathrm{rc}(\uu,t)$, it is straightforward to obtain \eqref{eq:5b} and \eqref{eq:5c}, respectively.

	\section{Proof of Corollary~\ref{col:4}}\label{app:3}
	The expression in \eqref{eq:34a} is obtained from \cite[Eq. (9)]{NCS:14:INB} with $k_{-1}=0$. Adding \eqref{eq:27a} and \eqref{eq:27c}, we obtain  
	\begin{align}\label{eq:41}
	\frac{\partial C(\uu,t)}{\partial t}=G_\A(\uu,t)+D_\A \nabla^2 C(\uu,t),
	\end{align}
	where $C(\uu,t)=C_\A(\uu,t)+C_\C(\uu,t)$. From the solution $C(\uu,t)$ of \eqref{eq:41}, we obtain \eqref{eq:34b}.

	\section{Proof of Corollary~\ref{cor:5}}\label{app:4}
	
	As the diffusion and reaction can be decoupled in $\Delta t$, to obtain \eqref{eq:55}, we need to solve the diffusion-only equation 
	\begin{align} \label{eq:58}
		\frac{\partial C(\rho,\phi,z,t)}{\partial t}=D \nabla^2 C(\rho,\phi,z,t),
	\end{align}
	for the following initial and boundary conditions
	\begin{align}\label{eq:59}
	 C(\rho,\phi,z,t=t_0)=f(\rho,\phi, z),
	\end{align}
	\begin{align}	\label{eq:60}
	\frac{\partial C(\rho,\phi,z,t)}{\partial z}\left|_{z=-\frac{b}{2},z=\frac{b}{2}}\right.=0,
	\end{align}
	\begin{align}\label{eq:61}
	\frac{\partial C(\rho,\phi,z,t)}{\partial \rho}\left|_{\rho=a}\right.=0.
	\end{align}
Here, $f(\rho,\phi, z)$ is an arbitrary initial condition at $t_0$. Then, we can obtain $	C_{\B}^\mathrm{df}(\uu,t+\dt)$ as a function of $ 	C_{\B}^\mathrm{df}(\uu,t)$  by solving  \eqref{eq:58} and mapping $ 	C_{\B}^\mathrm{df}(\uu,t)$  and $\Delta t$ to the initial condition and $t$ of  \eqref{eq:58}, respectively.

	Using variable separation, we assume $ C(\rho,\phi,z,t)=P(\rho)\Phi(\phi)Z(z)T(t)$, where $P(\rho)$, $\Phi(\phi)$, $Z(z)$, and $T(t)$ are functions of $\rho$, $\phi$, $z$, and $t$, respectively. From \eqref{eq:58}, we have
	\begin{align} \label{eq:69}
	P(\rho)\Phi(\phi)&Z(z) T'(t)= DT\left(\frac{\Phi(\phi)Z(z)}{\rho}\frac{\partial}{\partial \rho}\left(\rho P'(\rho)\right)\right.\nonumber\\	&\quad\left.+\frac{P(\rho)Z(z)}{\rho^2}\Phi''(\phi)+P(\rho)\Phi(\phi)Z''(z)\right),
	\end{align}
where $F'(x)$ and $F''(x)$ denote first and second derivative of $F(x)$, respectively. 
	Simplifying \eqref{eq:69}, we have 
	\begin{align}
		 T'(t)=	DT&\left(\frac{1}{\rho P(\rho)}\left(P'(\rho)+\rho P''(\rho)\right)\right.\nonumber\\
		 &\quad+ \left.\frac{1}{\Phi(\phi)\rho^2}\Phi''(\phi)+\frac{Z''(z)}{Z(z)}\right)
	\end{align}
	$\Leftrightarrow$
	\begin{align}
		 \frac{1}{\rho P(\rho)}\left(P'(\rho)+\rho P''(\rho)\right)&+\frac{1}{\Phi(\phi)\rho^2}\Phi''(\phi)+\frac{Z''(z)}{Z(z)}\nonumber\\
		 &\quad-\frac{T'(t)}{DT(t)}=0
	\end{align}
		$	\Leftrightarrow$
					\begin{IEEEeqnarray}{ll} \label{eq:67}
							\frac{1}{\rho P(\rho)}\left(P'(\rho)+\rho P''(\rho)\right)+l^2=0,\IEEEyesnumber\IEEEyessubnumber \label{eq:67a}\\
							\frac{1}{\Phi(\phi)\rho^2}\Phi''(\phi)=0,\IEEEyessubnumber\label{eq:67b}\\
							\frac{Z''(z)}{Z(z)}=k^2, \IEEEyessubnumber \label{eq:67c}\\
							-\frac{T'(t)}{DT(t)}=h^2, \IEEEyessubnumber \label{eq:67d}\\
							k^2+h^2=l^2. \IEEEyessubnumber \label{eq:67e}
				\end{IEEEeqnarray}
					
		 Eq.~\eqref{eq:67b} is obtained due to the fact that the system is symmetric and the concentration does not depend on $\phi$, i.e., $ \Phi(\phi)$ is a constant.	The solution of \eqref{eq:67c} for initial condition \eqref{eq:60} is $Z(z)=A \cos(\sqrt{-k^2}z)=A\cos\left(\frac{n\pi}{b}z\right)$, where $-k^2=\frac{n^2\pi^2}{b^2}$, $n=0, 1, 2, 3, \dots$,  and $A$ is a constant.  The solution of \eqref{eq:67a} for initial condition \eqref{eq:61} is 
			 $ P(\rho)=J_0\left(\frac{l_{j}}{a}\rho \right)$, where $l_{j}$ satisfies  $J'_0\left(l_{j} \right)=0$ and  $j=0, 1, 2, 3, \dots$. Then, due to \eqref{eq:67e}, the solution of \eqref{eq:67d} is $T(t)=C e^{-h^2Dt}=C e^{\left(-\frac{n^2\pi^2}{b^2}-\frac{l_{j}^2}{a^2}\right)Dt}$, where $C$ is a constant. Hence, we have
			 \begin{align}\label{eq:70}
			 &C(\rho,\phi,z,t)=\\\nonumber&\hspace{1cm}\sum_{n=0}^\infty\sum_{j=0}^\infty a_{nj} J_0\left(\frac{l_{j}}{a}\rho \right)
			 \cos\left(\frac{n\pi z}{b}\right)
			  e^{\left(-\frac{n^2\pi^2}{b^2}-\frac{l_{j}^2}{a^2}\right)Dt},
			 \end{align}
		 where $ a_{nj}$ is a constant.
			  From \eqref{eq:59}, we have
			  \begin{align}\label{eq:84}
			 \sum_{n=0}^\infty\sum_{j=0}^\infty a_{nj}  J_0\left(\frac{l_{j}}{a}\rho \right)
			  \cos\left(\frac{n\pi z}{b}\right)=f(\rho,\phi, z).
			  \end{align}
	To find $a_{nj}$, we multiply each side of \eqref{eq:84} by $J_0\left(\frac{l_{j}}{a}\rho \right)
	\cos\left(\frac{n\pi z}{b}\right)$, taking integrals with respect to  ${\rho}$ and ${z}$, respectively, and use the following orthogonality relations \cite{Lam:13:Lec}
	\begin{align}
	\int_0^a J_0\left(\frac{l_{j}}{a}\rho \right) J_0\left(\frac{l_{\tilde{j}}}{a}\rho \right) \rho \diff \rho=
	\begin{cases}
	\frac{a^2}{2},\qquad &\text{for }j=\tilde{j}=0  \\
	\frac{a^2 J_0^2\left(l_{j} \right) }{2}\delta_{j\tilde{j}},&\text{for }j,\tilde{j}>0
	\end{cases}
	\end{align}
		\begin{align}
		\int_{-\frac{b}{2}}^{\frac{b}{2}} \cos\left(\frac{n\pi}{b}z\right)\cos\left(\frac{\tilde{n}\pi}{b}z\right)\diff z=
		\begin{cases}
		b,& \text{for } n=\tilde{n}=0\\
		\frac{b}{2}\delta_{n\tilde{n}},& \text{for } n,\tilde{n}>0\\	
		\end{cases}
		\end{align}
	where $\delta_{x\tilde{x}}=1$ when $x=\tilde{x}$ and  $\delta_{x\tilde{x}}=0$ when $x\neq\tilde{x}$. Then, changing the variables of the integrals, i.e., $\rho\rightarrow \tilde{\rho}$, $\phi\rightarrow \tilde{\phi}$, and $z\rightarrow \tilde{z}$, we obtain
	\begin{align} \label{eq:71}
	&a_{00}=\int_{-\frac{b}{2}}^{\frac{b}{2}}\int_0^a\frac{2}{ a^2b}   \tilde{\rho} f(\tilde{\rho},\tilde{\phi}, \tilde{z})  \diff \tilde{\rho} \diff \tilde{z},\\\label{eq:72}
	&a_{0j}=\int_{-\frac{b}{2}}^{\frac{b}{2}}\int_0^a \frac{2}{ a^2 J_0^2\left(l_{j} \right) b} J_0\left(\frac{l_{j}}{a}\tilde{\rho} \right)  \tilde{\rho} f(\tilde{\rho},\tilde{\phi}, \tilde{z})  \diff \tilde{\rho}\diff \tilde{z},\\\label{eq:73}
&a_{n0}= \int_{-\frac{b}{2}}^{\frac{b}{2}}   \int_0^a \frac{4}{ a^2b  } \cos\left(\frac{n\pi}{b}\tilde{z}\right) \tilde{\rho} f(\tilde{\rho},\tilde{\phi}, \tilde{z}) \diff \tilde{\rho} \diff \tilde{z},\\\label{eq:74}
		&a_{nj}= \\\nonumber
		&\int_{-\frac{b}{2}}^{\frac{b}{2}} \int_0^a   \frac{4}{ a^2 J_0^2\left(l_{j} \right)b } \cos\left(\frac{n\pi}{b}\tilde{z}\right)J_0\left(\frac{l_{j}}{a}\tilde{\rho} \right)  \tilde{\rho} f(\tilde{\rho},\tilde{\phi}, \tilde{z}) \diff \tilde{\rho} \diff \tilde{z}.
	\end{align}
Substituting \eqref{eq:71}, \eqref{eq:72}, \eqref{eq:73}, and \eqref{eq:74} into \eqref{eq:70} and setting $t_0=t$ and $t$ in this proof equal to $\Delta t$, we obtain \eqref{eq:55}.

	\bibliographystyle{IEEEtran}
	\bibliography{IEEEabrv,MolecularBib}

\end{document}